\documentclass[11pt]{article}
\usepackage{a4wide}

\usepackage{amsmath}
\usepackage{amssymb}
\usepackage{graphicx}
\usepackage{subcaption}
\usepackage{url}
\usepackage{comment}
\pagestyle{plain}

\usepackage{amsfonts}
\usepackage{amsthm}
\usepackage{comment}
\usepackage{thmtools}
\usepackage{thm-restate}
\usepackage{mdframed}

\newtheorem{theorem}{Theorem}[]

\usepackage{todonotes}

\usepackage{graphicx}
\usepackage{booktabs}
\usepackage{pdflscape}
\usepackage{mdframed}
\usepackage{mathtools}
\usepackage{pdflscape}

\DeclareMathOperator{\Seq}{Seq}

\newcommand{\eps}{\varepsilon}
\renewcommand{\P}{\textsc{P}}
\newcommand{\NP}{\textsc{NP}}

\usepackage{xspace}
\usepackage{framed}

\usepackage{xcolor}
\usepackage{colortbl}
\colorlet{tableheadcolor}{gray!25} 
\colorlet{tablerowcolor}{gray!20} 
\newcommand{\rowcol}{\rowcolor{tablerowcolor}} %


%
%
%

\title{Learning fine-grained search space pruning\\and heuristics for combinatorial optimization~\footnote{The present manuscript integrates and extends the works~\cite{our-nips,Grassia2019,Dutta2019}, which appeared at AAAI'19, the DSO Workshop at IJCAI'19 and CIKM'19, respectively. Part of this work was done while all authors were at Nokia Bell Labs, Ireland.}}

\author{Juho Lauri \and Sourav Dutta\thanks{Eaton Corp., Ireland} \and Marco Grassia\thanks{University of Catania, Italy} \and Deepak Ajwani\thanks{University College Dublin, Ireland}}

\begin{document}
\maketitle

\begin{abstract}
  Combinatorial optimization problems arise naturally in a wide range of applications from diverse domains. Many of these problems are NP-hard and designing efficient heuristics for them requires considerable time, effort and experimentation. On the other hand, the number of optimization problems in the industry continues to grow. In recent years, machine learning techniques have been explored to address this gap. In this paper, we propose a novel framework for leveraging machine learning techniques to scale-up \emph{exact} combinatorial optimization algorithms.
  In contrast to the existing approaches based on deep-learning, reinforcement learning and restricted Boltzmann machines that attempt to directly learn the output of the optimization problem from its input (with limited success), our framework learns the relatively simpler task of pruning the elements in order to reduce the size of the problem instances. In addition, our framework uses only interpretable learning models based on intuitive local features and thus the learning process provides deeper insights into the optimization problem and the instance class, that can be used for designing better heuristics.

 For the classical maximum clique enumeration problem, we show that our framework can prune a large fraction of the input graph (around 99~\% of nodes in case of sparse graphs) and still detect almost all of the maximum cliques. This results in several fold speedups of state-of-the-art algorithms. Furthermore, the classification model used in our framework highlights that the chi-squared value of neighborhood degree has a statistically significant correlation with the presence of a node in a maximum clique, particularly in dense graphs which constitute a significant challenge for modern solvers. We leverage this insight to design a novel heuristic for this problem outperforming the state-of-the-art. Our heuristic is also of independent interest for maximum clique detection and enumeration.
\end{abstract}

\section{Introduction}
Combinatorial optimization is at the heart of a large number of applications from a wide range of domains such as economics (e.g., price optimization~\cite{FLS15}, efficient energy scheduling~\cite{NAL16}), bioinformatics (e.g.,~\cite{BW06,DKRJ13,KM95}), robotics (e.g.,~\cite{SR13}), industrial production, and planning (e.g.,~\cite{R12,R14}). In fact, numerous real-life decision-making problems have been formulated in terms of combinatorial optimization problems~\cite{Trevisan11,Korte12} and as a result, combinatorial optimization algorithms are widely applied in industry.

Combinatorial optimization problems typically involve finding groupings (subsets), orderings or assignments of a discrete, finite set of objects that satisfy certain conditions or constraints. For instance, in the maximum clique enumeration problem, the goal is to explicitly list all the largest subsets of nodes that are all adjacent to each other. In the travelling salesman problem (TSP), the goal is to identify a subset of edges that constitute a smallest tour covering all nodes (or alternatively, an ordering of nodes which results in a shortest tour). Both these and many other computational problems are $\NP$-hard, implying that --- unless $\P = \NP$ --- no polynomial-time algorithms exist for these problems that can solve every instance of the problem to optimality. Over the last century, numerous approaches have been developed for these applications, including (i) exact algorithms with exponential time complexity, (ii) approximation algorithms with formal guarantees on the solution quality, (iii) parameterized algorithms (see e.g.,~\cite{fpt-book} for more), (iv) carefully designed heuristics that leverage the structure often available in real-world instances and (v) meta-heuristic frameworks such as genetic algorithms or ant colony optimization. While exact algorithms have poor scalability, the design of approximation algorithms, parameterized algorithms and domain-specific heuristics require considerable development and design time. Moreover, it is not always the case that theoretically appealing approximation or parameterized algorithms would be practical. Similarly, meta-heuristics often require significant configuration time to select the best parameters and operators for a given optimization problem and can take considerable time to find a combination of elements close to the optimal solution. Furthermore, the solutions produced by heuristics (including those generated by meta-heuristic frameworks) can be arbitrarily far from an optimal solution.

As the number of optimization problems continues to grow in industry and the design time for efficient solutions remains high, it is vital to explore if we can accelerate the algorithm design process using machine learning techniques. With the recent advances in machine learning techniques and its success in areas such as multi-media classification, machine translation, text generation, recommender systems and computer games, researchers have started exploring if they can also be successfully applied to combinatorial optimization.

Existing machine learning techniques for combinatorial optimization can be broadly categorized into three different approaches:
\begin{enumerate}
\item
  Supervised deep-learning approaches that directly learn the output from the input. Examples of this framework include the pointer network~\cite{Vinyals2015}. 
\item
  Reinforcement learning to learn a mapping from state to policy. Examples of this framework include neural combinatorial optimization~\cite{Bello2016} and greedy Q-learning for graph optimization problems~\cite{Khalil2017}.
\item
  Unsupervised approaches based on restricted Boltzmann machines. Examples of this framework include the Estimation of Distribution Algorithm by Probst~{et al.}~\cite{PRG17}.
\end{enumerate}

These frameworks aim to learn the exact decision boundary to separate the elements in the optimal subset from the remaining elements. Since many combinatorial optimization problems are NP-hard, this is a challenging task requiring complex learning models with a large number of parameters. As a result, the learning models are not easy to interpret. Since the learned heuristic (mapping from input to output) is implicit in the complex learning model, this implies that the learned heuristic is itself not interpretable. This has the following consequences:
\begin{itemize}
\item
  With a large number of parameters, the learned algorithm is not easy for humans to understand and consequently there is little potential for mathematical analysis of the resultant algorithm.
\item
  It is not clear if the learned model will still work if there is an additional constraint added to the problem. This is a major concern for applications in industry where the first modelling of a problem into the optimization objective and associated constraints is rarely enough and new constraints are incrementally discovered and added.
\item
  It is not clear if the learned model will still work if a dataset from a slightly different distribution is used as an input. This has further implications for cross-domain generalizability of the learning model.
  \end{itemize}

\begin{figure}[t]
  \begin{subfigure}{0.48\textwidth}
  \begin{center}
    \includegraphics[scale=0.6]{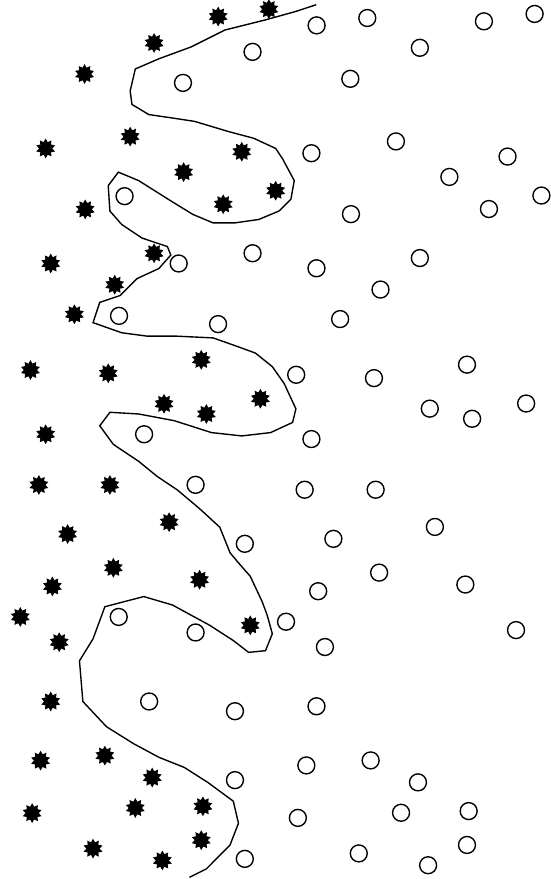}   
  \caption{\label{fig:exact_decision} Exact decision boundary}
    \end{center}
  \end{subfigure}
  \begin{subfigure}{0.48\textwidth}
  \begin{center}
    \includegraphics[scale=0.6]{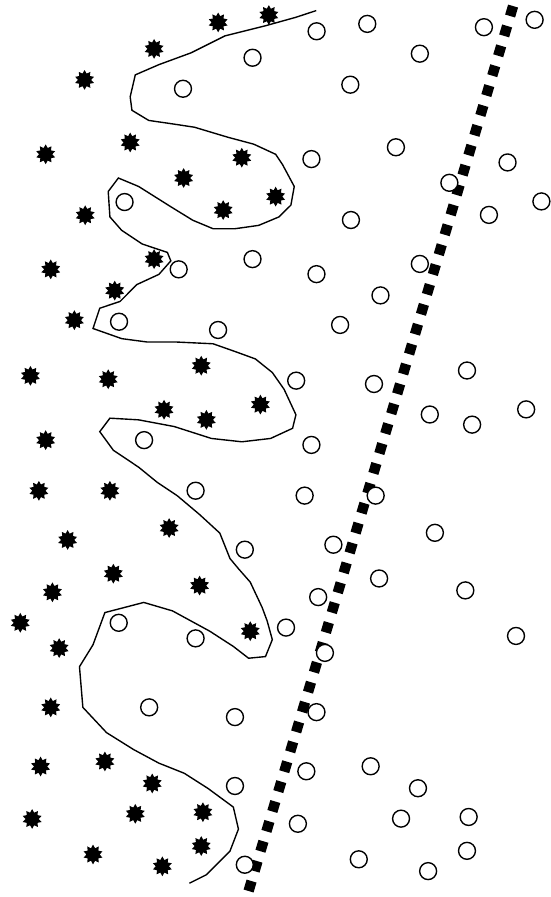}    
  \caption{\label{fig:simple_boundary} Interpretable model boundary}
    \end{center}
  \end{subfigure}
  \begin{subfigure}{0.48\textwidth}
  \begin{center}
    \includegraphics[scale=0.6]{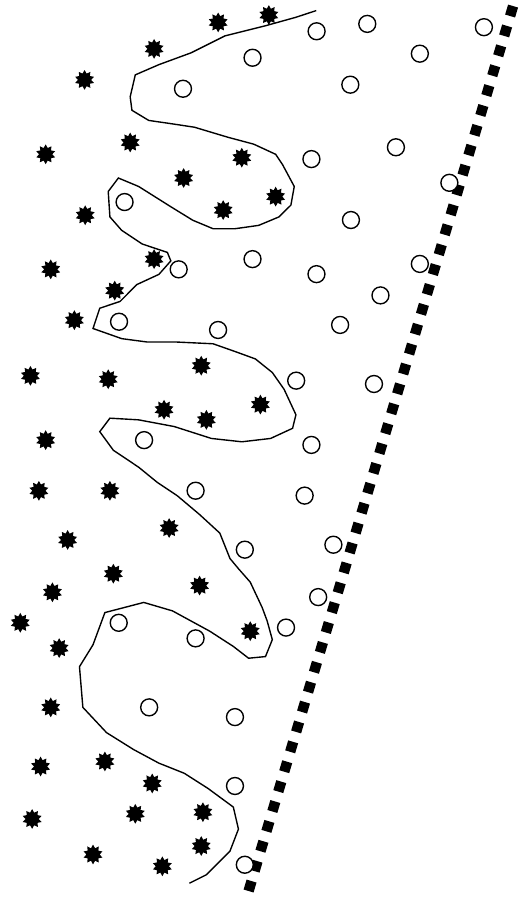}    
  \caption{\label{fig:simple_boundary_prune} Pruning}
    \end{center}
  \end{subfigure}
  \begin{subfigure}{0.48\textwidth}
  \begin{center}
    \includegraphics[scale=0.6]{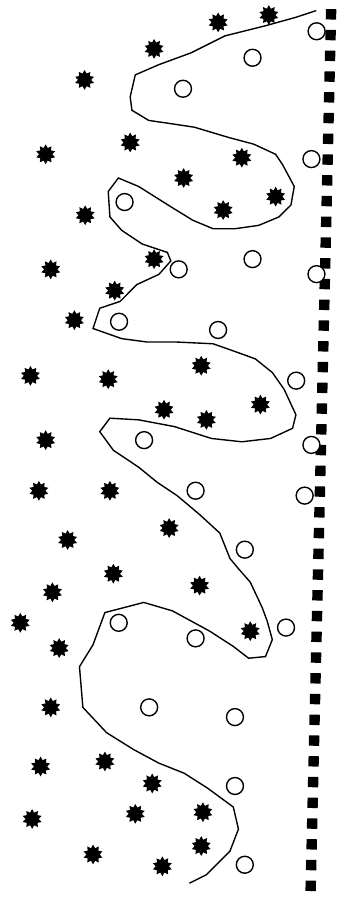}    
  \caption{\label{fig:second_boundary} Repeated pruning}
    \end{center}
  \end{subfigure}
  \caption{\label{fig:framework} Depiction of the overall framework. The black stars indicate the elements in the optimal subset while the white circles represent the elements not in the optimal subset. Earlier learning frameworks attempted to learn the exact decision boundary (drawn with a solid black line). Our framework use simple, interpretable classifier (shown by dashed lines) to repeatedly prune the white circles.}
\end{figure}

In contrast, we propose a novel framework for solving combinatorial optimization problems that uses interpretable learning models based on intuitive local features. For the interpretable models to work well, we focus on the relatively simpler task of (non-exhaustive) pruning of the elements that are \emph{not} in the optimal subset. This reduces the problem size, often significantly, enabling existing solvers to deal with considerably larger instances. Further, it has the potential to not only reduce the size of the instances, but to make the search space easier to handle by breaking symmetries, for example.

To prune the elements further, we extend our framework to have multiple pruning stages. In each stage, the framework learns a new classification model for elements that were not pruned by earlier classification models (thereby increasingly focusing on harder elements to prune). Figure~\ref{fig:framework} illustrates the exact decision boundary and the multi-stage pruning framework.

Furthermore, the learning process in our framework provides deeper insights into the optimization problem and the instance class. In particular, it identifies the combination of simple features that are most indicative of an element belonging to an optimal subset. This insight can be leveraged to design better heuristics for the optimization problems.

 For the classical maximum clique enumeration problem, we show that our framework can prune a large fraction of the graph (around 99~\% of nodes in case of sparse graphs) and still detect almost all of the maximum cliques. This results in several fold speedups of state-of-the-art algorithms. Furthermore, the classification model used in our framework highlights that the chi-squared value of neighborhood degree has a statistically significant correlation with the presence of a node in a maximum clique, particular in the case of dense graphs which constitute a major challenge for state-of-the-art solvers. We leverage this insight to design a novel heuristic for this problem enhancing the state-of-the-art. Our heuristic is also of independent interest for maximum clique detection and enumeration.

\section{Related work}

\subsection{Machine learning for combinatorial optimization} Although the area of learning techniques for combinatorial optimization is only beginning to flourish, many frameworks have been developed in the last five years. DiCarro~\cite{dC19} surveyed various learning frameworks for combinatorial optimization, covering various issues related to the complex architectures of the models and the large number of parameters. The existing literature on learning techniques can be broadly categorized into the following classes.

\paragraph{Supervised learning to directly learn the solution of combinatorial optimization problems} This framework involves the use of deep-learning for learning the output.

Vinyals et al.~\cite{Vinyals2015} viewed the task of learning combinatorial optimization solutions as a sequence-to-sequence learning problem. They aimed for directly learning the output solution from the input sequence for optimization problems such as convex hull and Delauney triangulation. The authors used a recurrent neural network (RNN) for the sequence-to-sequence learning. To deal with the issue of long-range correlations (elements far from each other in the input sequence affecting the same output element), they used an attention mechanism to augment the RNN model. To deal with the issue of a fixed vocabulary size required for the output of a recurrent neural network, they used pointers to elements in the input stream, resulting in the name pointer network.

\paragraph{Reinforcement learning for combinatorial optimization} Deep learning approaches require a large amount of training data and to generate this, a large number of NP-hard problem instances need to be solved, limiting the applicability of these techniques. On the other hand, given a solution, it is relatively easy to evaluate the quality of the solution by computing the optimization objective. Thus, in recent years, reinforcement learning based techniques have been developed to solve optimization problems. In this framework, the goal is to learn a stochastic policy that samples solutions of high quality with high probability. In particular, Khalil et al.~\cite{Khalil2017} used the Q-learning technique to learn the solutions for graph optimization problems, specifically minimum vertex cover and maximum cut. They encode nodes using a graph embedding technique and then build a solution using a greedy construction meta-algorithm. The greedy decisions are based on an estimated Q-function parameterized by the embedding. The embedding parameters for the Q-function are updated step by step based on the partial solution computed. 

The GCOMB approach of Mittal et al.~\cite{MDMRS19} follows the same framework, but claims to scale to very large graphs. Another example of this framework is the use of neural combinatorial optimization~\cite{Bello2016} for TSP.

\paragraph{Unsupervised learning for combinatorial optimization} Unsupervised approaches via restricted Boltzmann machines have also been used to deal with combinatorial optimization problems. An example of this framework is the Estimation of Distribution Algorithm by Probst et al.~\cite{PRG17}. This approach iteratively builds and samples from a probabilistic model of candidate solutions. Intuitively, these approaches build information about the probability distribution of good candidate solutions. This model is built using contrastive divergence.

\paragraph{Limitations of the above frameworks} The learning models used in these existing state-of-the-art frameworks are both hard to interpret and architecturally complex. For instance, the neural combinatorial optimization approach~\cite{Bello2016} is a combination of pointer networks (with two LSTM networks), a Monte Carlo policy gradient and an actor-critic architecture. The complexity of these approaches comes at a significant cost of interpretability. Since the learned algorithm (mapping from input to output) is implicit in the complex learning model, this implies that the learned heuristic is itself not easy for humans to understand.
As a direct consequence, it is difficult to analyze the learned algorithm mathematically.
Moreover, it is unclear what features of the input instances are being exploited by the learned heuristic and on which class of datasets will it perform well.

\paragraph{Maximum clique enumeration}
We instantiate our framework for the maximum clique enumeration (MCE) problem. In this problem, the goal is to list all \emph{maximum} (as opposed to maximal) cliques in a given graph. The maximum clique problem is one of the most heavily-studied combinatorial problems arising in various domains such as in the analysis of social networks~\cite{soc,Fortunato2010,Palla2005,Papadopoulos2012}, behavioral networks~\cite{beha}, and financial networks~\cite{finan}. It is also relevant in clustering~\cite{dynamic,Yang2016} and cloud computing~\cite{Wang2014,Yao2013}. The listing variant of the problem, MCE, is encountered in computational biology~\cite{bio,Eblen2012,Yeger2004,mce} in problems like the detection of protein-protein interaction complex, clustering protein sequences, and searching for common cis-regulatory elements~\cite{protein}.

The computational aspects of the problem are well-studied. Indeed, it is $\NP$-hard to even approximate the maximum clique problem within $n^{1-\epsilon}$ for any $\epsilon > 0$~\cite{Zuckerman2006}.
Furthermore, unless an unlikely collapse occurs in complexity theory, the problem of identifying whether a graph of $n$ vertices has a clique of size $k$ is not solvable in time $f(k) n^{o(k)}$ for any function $f$~\cite{Chen2006}. As such, even small instances of this problem can be non-trivial to solve. Moreover, under reasonable complexity-theoretic assumptions, there is no polynomial-time algorithm that preprocesses an instance of $k$-clique to have only $f(k)$ vertices, where $f$ is any computable function depending solely on $k$ (see e.g.,~\cite{fpt-book}). These results indicate that it is unlikely that an efficient preprocessing method for MCE exists that can reduce the size of input instance drastically while guaranteeing to preserve all the maximum cliques. In particular, it is unlikely that polynomial-time sparsification methods (see e.g.,~\cite{Batson2013}) would be applicable to MCE. This has led researchers to focus on heuristic pruning approaches.

\paragraph{On preprocessing for maximum clique}
For the discussion to follow, it will be useful to recall the concept of a \emph{$k$-core} of a graph $G$.
Here, the $k$-core of $G$ is a maximal subgraph of $G$ where every vertex in the subgraph has degree at least $k$ in the subgraph. The \emph{core number} of a vertex $v$ is the largest $k$ for which a $k$-core containing $v$ exists.
A typical preprocessing step in a state-of-the-art solver is the following: (i) quickly find a large clique (say of size $k$), (ii) compute the core number of each vertex of the input graph $G$, and (iii) delete every vertex of $G$ with core number less than $k-1$. This can be equivalently achieved by repeatedly removing all vertices with degree less than $k$. For example, the solver \texttt{pmc}~\cite{Rossi2015b} -- which is regarded as ``\emph{the} leading reference solver''~\cite{San2016} -- use this as the only preprocessing method. 
However, there are two major downsides to this preprocessing step.
First, it is crucially dependant on $k$, the size of a large clique found. 
Since the maximum clique size is $\NP$-hard to approximate within a factor of $n^{1-\epsilon}$, maximum clique estimates with no formal guarantees are used.
Second and more important, it is typical that even if the estimate $k$ was equal to the size of a maximum clique in $G$, the core number of most vertices could be considerably higher than $k-1$. This is particularly true in the case of dense graphs and it results in little or \emph{no} pruning of the search space. Similarly, other preprocessing strategies (see e.g.,~\cite{Eblen2010} for more discussion) depend on $\NP$-hard estimates of specific graph properties and are not useful for pruning dense graphs.

These facts further motivate the quest for preprocessing methods that (i) are effective on dense graphs and (ii) work independently of any estimate~$k$ for the maximum clique size. Unfortunately, as described earlier, under widely-believed complexity-theoretic assumptions, no methods exist that can give strong guarantees for pruning arbitrary graphs. This raises the question if one can discover heuristic methods that can do significant pruning, in practice, on graphs from different domains, including dense graphs. Even more importantly, can we \emph{learn} a heuristic to prune the input instance?

\section{Our framework}
In this section, we describe our framework for subset-based combinatorial optimization problems. For ease of exposition, we describe the framework in terms of the MCE problem.
We stress that our approach is not restricted to MCE, but can be applied to other problems as well.

In our case, we assume the instance is represented as an undirected graph $G=(V,E)$.
Moreover, in contrast to previous approaches, we view \emph{individual vertices} of $G$ as classification problems as opposed to $G$ itself.
That is, the problem is to induce a mapping $\gamma : V \to \{0,1\} $ from a set of $L$ training examples $T = \{ \langle f(v_i), y_i \rangle \}^L_{i=1}$, where $v_i \in V$ is a vertex, $y_i \in \{0,1\}$ a class label, and $f : V \to \mathbb{R}^d$ a mapping from a vertex to a $d$-dimensional feature space.
For reasons of scalability, we strive to keep $d$ small and to ensure that $f$ can be computed efficiently.

\paragraph{Single-stage sparsification} To learn the mapping $\gamma$ from $T$, we use a probabilistic classifier $P$ which outputs a probability distribution over $\{0,1\}$ for a given $f(u)$ for $u \in V$. 
A natural parameterized search strategy, which we call \emph{probabilistic preprocessing} (or \emph{single-stage sparsification}), for enhancing a search algorithm $\mathcal{A}$ by $P$ is as follows. Define a \emph{confidence threshold} $q \in [0,1]$. Delete from $G$ each vertex predicted by $P$ to \emph{not} be in a solution with probability at least $q$, i.e., let $G' = G \setminus V'$, where $V' = \{ u \mid u \in V \wedge P(u = 0) \geq q \}$. 
Execute $\mathcal{A}$ with $G'$ as input instead of $G$.
Here, the purpose of $q$ is to control the error and pruning rate of preprocessing: (i) it is more acceptable to not remove a vertex that is \emph{not} in a solution than to remove a vertex that \emph{is} in a solution, and (ii) a lower value of $q$ translates to a possibly higher pruning rate.
Clearly, this strategy is a heuristic, i.e., it is possible that the cost of an optimal solution in $G'$ differs from that in $G$.

\paragraph{Multi-stage sparsification}
A natural generalization of the probabilistic preprocessing strategy is the following approach that we call \emph{multi-stage sparsification}: 
Let $\mathcal{G}_1$ be the input set of networks.
Consider a graph $G \in \mathcal{G}_1$. 
Let $\mathcal{M}$ be the set of all maximum cliques of $G$, and denote by $V(\mathcal{M})$ the set of all vertices in $\mathcal{M}$.
The positive examples in the training set $T_1$ consist of all vertices that are in some maximum clique ($V(\mathcal{M})$) and the negative examples are the ones in the set $V \setminus V(\mathcal{M})$. 
Since the training dataset can be highly skewed, we under-sample the larger class to achieve a balanced training data. A probabilistic classifier $P_1$ is trained on the balanced training data in stage $1$. Then, in the next stage, we remove all vertices that were predicted by $P_1$ to be in the negative class with a probability above a predefined threshold $q$. We focus on the set $\mathcal{G}_2$ of subgraphs (of graphs in $\mathcal{G}_1$) induced on the remaining vertices and repeat the above process. The positive examples in the training set $T_2$ consists of all vertices in some maximum clique ($V(\mathcal{M})$) and the negative examples are the ones in the set $V \setminus V(\mathcal{M})$, training dataset is balanced by under-sampling and we use that balanced dataset to learn the probabilistic classifier $P_2$. We repeat the process for $\ell$ stages.

\section{Computational features}
\label{sect:features}
In this section, we describe the computational features used in our framework.

\paragraph{Graph-theoretic features}
We use the following graph-theoretic features: \textbf{(F1)} number of vertices, \textbf{(F2)} number of edges, \textbf{(F3)} vertex degree, \textbf{(F4)} local clustering coefficient (LCC), and \textbf{(F5)} eigencentrality.

The crude information captured by features (F1)-(F3) provide a reference for the classifier for generalizing to different distributions from which the graph might have been generated. 
Feature (F4), the LCC of a vertex is the fraction of its neighbors with which the vertex forms a triangle, encapsulating the well-known small world phenomenon.  
Feature (F5) eigencentrality represents a high degree of connectivity of a vertex to other vertices, which in turn have high degrees as well. 
The \emph{eigenvector centrality} $\vec{v}$ is the eigenvector of the adjacency matrix $A$ of $G$ with the largest eigenvalue $\lambda$, i.e., it is the  solution of $\vec{A}\vec{v} = \lambda\vec{v}$.
The $i$th entry of $\vec{v}$ is the \emph{eigencentrality} of vertex $v$. In other words, this feature provides a measure of local ``denseness''. A vertex in a dense region shows higher probability of being part of a large clique.

\paragraph{Statistical features}
In addition, we use the following statistical features: \textbf{(F6)} the $\chi^2$ value over vertex degree, \textbf{(F7)} average $\chi^2$ value over neighbor degrees, \textbf{(F8)} $\chi^2$ value over LCC, and \textbf{(F9)} average $\chi^2$ value over neighbor LCCs.

The intuition behind (F6)-(F9) is that for a vertex $v$ present in a large clique, its degree and LCC would deviate 
from the underlying expected distribution characterizing the graph. 
Further, the neighbors of $v$ also present in the clique would demonstrate such behaviour.
Indeed, statistical features have been shown to be robust in approximately capturing local structural patterns~\cite{graph}.

Statistical significance is captured by the notion of p-value~\cite{fitStatistics}, and well-estimated~\cite{pear} by the {\em Pearson's chi-square statistic}, $\chi^2$, computed as 
\begin{equation}
\label{eq:chis}
\chi^2 = \sum_{\forall i}\left[\left(O_i - E_i\right)^2 / E_i\right],
\end{equation}
where $O_i$ and $E_i$ are the observed and expected number of occurrences of the possible outcomes~$i$.

\begin{figure}[t]
    \centering
        \includegraphics[width=0.5\textwidth,keepaspectratio]{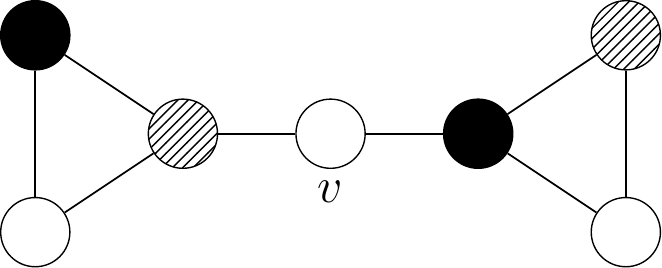} 
    \caption{While the shown proper 3-coloring is optimal, we can swap the non-white colors in either triangle to see that the local chromatic density $\chi_d(v) = 1/3$.}
    \label{fig:chrom-density}
\end{figure}

\paragraph{Local chromatic density}
Let $G=(V,E)$ be a graph.
A \emph{$k$-coloring} of $G$ is a function $c:V\rightarrow \{1, 2, \ldots, k\}$.
A {\it coloring} is a $k$-coloring for some $k \le n$, where $n = |V|$. 
A coloring $c$ is \emph{proper} if $c(u) \neq c(v)$ for every edge $uv\in E$. 
The {\it chromatic number} of $G$, denoted by  $\chi(G)$, is the smallest $k$ such that $G$ has a proper $k$-coloring.
We define the \emph{local chromatic density} of a vertex $v \in V$, denoted by $\chi_d(v)$, as the ratio between the minimum number of distinct colors appearing in $N(v)$ among any optimal proper colouring of $G$ and the chromatic number of $G$.
Informally, the local chromatic density of $v$ is the minimum possible number of colors in the immediate neighborhood of $v$ in any optimal proper coloring of $G$ (see Figure~\ref{fig:chrom-density}).

We use the local chromatic density as the feature \textbf{(F10)}.
A vertex $v$ with high $\chi_d(v)$ means that the neighborhood of $v$ is dense, as it captures the adjacency relations between the vertices in $N(v)$.
Thus, a vertex in such a dense region has a higher chance of belonging to a large clique.

However, the problem of computing $\chi_d(v)$ is computationally difficult.
In the decision variant of the problem, we are given a graph $G=(V,E)$, a vertex $v \in V$, and a ratio $q \in (0,1)$. 
The task is to decide whether there is proper $k$-coloring $c$ of $V$ witnessing $\chi_d(v) \geq q$.
As shown in the following, the claimed hardness is straightforward to establish.
\begin{theorem}
Given a graph $G=(V,E)$, $v \in V$, and $q \in (0,1)$, it is $\NP$-hard to decide whether $\chi_d(v) \leq q$.
\end{theorem}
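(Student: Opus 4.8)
The plan is to give a polynomial-time many-one reduction from the classical $3$-\textsc{Coloring} problem (deciding whether $\chi(H)\le 3$), which is $\NP$-complete. Given an instance $H=(W,F)$ of $3$-\textsc{Coloring}, I would construct $G$ by adding a single fresh vertex $v$ and joining it to every vertex of $W$; that is, $v$ becomes universal with respect to $W$, so that $N(v)=W$. The intended target value of the confidence parameter is $q=3/4$. The whole construction is clearly computable in time polynomial in $|H|$, so it suffices to show that $\chi_d(v)\le 3/4$ holds in $G$ if and only if $H$ is $3$-colorable.

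The heart of the argument is a clean closed form for $\chi_d(v)$ in the constructed graph. First I would observe that, because $v$ is adjacent to all of $W$, any proper coloring of $G$ restricted to $W$ is a proper coloring of $H$, and the color of $v$ must differ from every color appearing on $W$; hence $\chi(G)=\chi(H)+1$. Next, I claim the numerator of $\chi_d(v)$ — the minimum number of distinct colors on $N(v)=W$ over \emph{optimal} colorings of $G$ — is exactly $\chi(H)$. Indeed, in any coloring using the optimal total of $\chi(H)+1$ colors, the one color spent on $v$ is disjoint from the palette used on $W$, forcing $W$ to receive exactly $\chi(H)$ colors; conversely, such a coloring is realizable by optimally coloring $H$ and giving $v$ a brand-new color. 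Therefore $\chi_d(v)=\chi(H)/(\chi(H)+1)$.

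Since $t\mapsto t/(t+1)$ is strictly increasing, the inequality $\chi_d(v)=\chi(H)/(\chi(H)+1)\le 3/4$ is equivalent to $\chi(H)\le 3$; when $H$ is not $3$-colorable we have $\chi(H)\ge 4$ and hence $\chi_d(v)\ge 4/5>3/4$, so there is even a strict gap. This establishes the equivalence and completes the reduction (and, for a general target, choosing $q=k/(k+1)$ reduces from $k$-\textsc{Coloring} for any fixed $k\ge 3$). The only step requiring genuine care is the middle one: pinning down that the universal vertex forces $\chi(G)=\chi(H)+1$ and, more importantly, that \emph{every} optimal coloring is compelled to use precisely $\chi(H)$ colors on $N(v)$, so that no ``minimum over optimal colorings'' slack can lower the numerator. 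Everything else is routine bookkeeping.
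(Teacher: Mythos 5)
Your proof is correct, but it takes a genuinely different route from the paper's. The paper reduces from $k$-coloring by forming the \emph{disjoint union} $G' = G \cup K_k$ and placing $v$ inside the $K_k$: there the numerator of $\chi_d(v)$ is frozen at $k-1$ (the neighbors of $v$ form a clique and always see $k-1$ distinct colors), so all the action is in the denominator, giving $\chi_d(v) = (k-1)/\max(\chi(G),k)$. You instead \emph{join} a fresh universal vertex $v$ to the $3$-coloring instance $H$ with vertex set $W$, which pins down both parts at once: $\chi(G)=\chi(H)+1$, and in every optimal coloring the color spent on $v$ is unavailable on $W$, so $N(v)=W$ receives exactly $\chi(H)$ colors; hence the closed form $\chi_d(v)=\chi(H)/(\chi(H)+1)$, strictly monotone in $\chi(H)$. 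Your bookkeeping for this middle step is sound, and it is exactly the point needing care, as you note. Your construction in fact buys something the paper's does not: in the paper's gadget $\chi_d(v)\leq (k-1)/k$ holds \emph{unconditionally} (the denominator can only grow when $G$ is not $k$-colorable), so what the paper's argument actually establishes is $\NP$-hardness of deciding $\chi_d(v)\geq q$ --- matching the decision variant defined in the text just before the theorem --- and the ``$\leq$'' in the theorem statement appears to be a sign slip relative to its own proof. Your monotone form gives $\chi_d(v)\leq 3/4$ if and only if $\chi(H)\leq 3$, with a constant gap ($\chi_d(v)\geq 4/5$ otherwise), so it proves the ``$\leq$'' statement literally as written, and via $q=k/(k+1)$ it works for an infinite family of thresholds, just as the paper's construction does for $q=(k-1)/k$.
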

\begin{proof}
Let $G$ be an instance of graph $k$-coloring, for any fixed $k \geq 3$.
This problem is well-known to be $\NP$-complete for every $k \geq 3$.
We construct $G' = G \cup K_k$, i.e., $G'$ is the disjoint union of $G$ and a complete graph on $k$ vertices.
Fix $v$ to be an arbitrary vertex of the $K_k$.
We claim that $G$ has a proper $k$-coloring if and only if $\chi_d(v) \leq q$, where $q = \tfrac{k-1}{k}$.

If $G$ admits a proper $k$-coloring, we map $\{1,2,\ldots,k\}$ bijectively to $V(K_k)$, implying that $\chi_d(v) = \tfrac{k-1}{k}$.
On the other hand, a proper $k$-coloring of $G'$ witnessing that $\chi_d(v) = \tfrac{k-1}{k}$ is clearly a proper $k$-coloring when restricted to $G$ as well.
\end{proof}
\noindent Despite its computational hardness, we can estimate $\chi_d(v)$ with the following simple heuristic. Compute a proper coloring for $G$ using e.g., the well-known linear-time greedy heuristic of~\cite{Welsh1967} and then estimate $\chi_d(v)$ as the ratio between the number of colors in $N(v)$ divided by the number of colors used by the greedy coloring algorithm.
Note that we could use other graph coloring heuristics as well (see e.g.,~\cite{Lewis2015} for an overview of the state-of-the-art).

\begin{table}[t]
  \centering
  \small
  \caption{The effect of introducing the feature (F10) the local chromatic density into the feature set. The column ``W/o'' is the vertex classification accuracy of the classifier of~Subsection~\ref{subs:sparse} without (F10) while column ``With'' is the same with (F10).}
  \label{tbl:chrom-feat}
  \def\arraystretch{1}
  \setlength{\tabcolsep}{3pt}
  \begin{tabular}{*{111}{l}}
    \toprule
    \multicolumn{2}{c}{\textbf{bio}} & \multicolumn{2}{c}{\textbf{soc}} & \multicolumn{2}{c}{\textbf{socfb}} & \multicolumn{2}{c}{\textbf{web}} & \multicolumn{2}{c}{\textbf{all}} \\
    \cmidrule(lr){1-2}
    \cmidrule(lr){3-4}
    \cmidrule(lr){5-6}
    \cmidrule(lr){7-8}
    \cmidrule(lr){9-10}
    \textbf{W/o} & \textbf{With} & \textbf{W/o} & \textbf{With} & \textbf{W/o} & \textbf{With} & \textbf{W/o} & \textbf{With} & \textbf{W/o} & \textbf{With} \\
    0.95   & 0.98    & 0.89   & 0.99   & 0.90   & 0.95    & 0.96   & 0.99    & 0.87  & 0.96    \\
    \bottomrule
  \end{tabular}
\end{table}

\paragraph{Learning over edges}
Instead of individual vertices, we can view the framework also over \emph{individual edges}.
In this case, the goal is to find a mapping $\gamma' : E \to \{0,1\}$, and the training set $L'$ contains feature vectors corresponding to edges instead of vertices.
We also briefly explore this direction in this work.

\paragraph{Edge features}
We use the following features (E1)-(E9) for an edge $\{u,v\}$.
\textbf{(E1)} Jaccard similarity is the number of common neighbors of $u$ and $v$  divided by the number of vertices that are neighbors of at least one of $u$ and $v$. 
\textbf{(E2)} Dice similarity is twice the number of common neighbors of $u$ and $v$, divided by the sum of their degrees.
\textbf{(E3)} Inverse log-weighted similarity is the number of common neighbors of $u$ and $v$ weighted by the inverse logarithm of their degrees. Formally, we compute $\sum_{x \in N(u) \cap N(v)} 1 / \log(\deg(x))$.
\textbf{(E4)} Cosine similarity is the number of common neighbors of $u$ and $v$ divided by the geometric mean of their degrees.
The next three features are inspired by the vertex features: \textbf{(E5)} average LCC over $u$ and $v$, \textbf{(E6)} average degree over $u$ and $v$, and \textbf{(E7)} average eigencentrality over $u$ and $v$.
\textbf{(E8)} is the number of length-two paths between $u$ and $v$.
Finally, we use \textbf{(E9)} \emph{local edge-chromatic density}, i.e., the number of distinct colors on the common neighbors of $u$ and $v$ divided by the total number of colors used in any optimal proper coloring.

The intuition behind (E1)-(E4) is well-established for community detection; see e.g.,~\cite{Adamic2003} for more.
For (E8), observe that the number of length-two paths is high when the edge is part of a large clique, and at most $n-2$ when $\{u,v\}$ is an edge of a complete graph on $n$ vertices.
Notice that (E9) could be converted into a deterministic rule: the edge $\{u,v\}$ can be safely deleted if the common neighbors of $u$ and $v$ see less than $k-2$ colors in any proper coloring of the input graph $G$, where $k$ is an estimate for $\omega(G)$.
To our best knowledge, such a rule has not been considered previously in the literature.
Further, notice that there are situations in which this rule \emph{can} be applied whereas the similar vertex rule uncovered from (F10) cannot.
To see this, let $G$ be a graph consisting of two triangles $\{a,b,c\}$ and $\{x,y,z\}$, connected by an edge $\{a,x\}$, and let $k = 3$.
The vertex rule cannot delete $a$ nor $x$, but the described edge rule removes $\{a,x\}$.

\section{Experimental results}
In this section, we describe how multi-stage sparsification is applied to the MCE problem and our computational results.


All experiments are executed on a machine equipped with an Intel Core i7-4770K CPU (3.5 GHz), 8 GB of RAM, running Ubuntu 16.04.

\paragraph{Training and test data}
All our datasets are obtained from Network Repository~\cite{Rossi2015} (available at \url{http://networkrepository.com/}).
We discard all vertex and edge weights and parallel edges (if any) and treat every directed edge as undirected.

For dense networks, we choose a total of 30 networks from various categories with the criteria that the edge density is at least 0.5 in each.
We name this category ``dense''.
The test instances are in Table~\ref{tbl:dense}, chosen based on empirical hardness (i.e., they are solvable in reasonable amount of time).

For sparse networks, we choose our training data from four different categories: 31 biological networks (``bio''), 32 social networks (``soc''), 107 Facebook networks (``socfb''), and 13 web networks (``web'').
In addition, we build a fifth category ``all'' that comprises all networks from the mentioned four categories.
The test instances are in Table~\ref{tbl:pruning}.

\paragraph{Feature computation}
We implement the feature computation in C++, relying on the \texttt{igraph}~\cite{igraph} C graph library.
In particular, our feature computation is single-threaded with further optimization possible.

\paragraph{Domain oblivious training via local chromatic density}
To achieve a high classification accuracy, it is natural to assume that the classifier should be trained with networks coming from the same domain, and that testing should be performed on networks from that domain.
Certainly, some similarity is needed between the two for training to be effective.
For example, sparse networks (say trees) should not be representative of dense networks.
However, we demonstrate in Table~\ref{tbl:chrom-feat} that a classifier can be trained with networks from various domains, yet predictions remain accurate across different domains (see column ``all'').
The accuracy is boosted considerably by the introduction of the local chromatic density (F10) into the feature set (see Table~\ref{tbl:chrom-feat}).
In particular, when generalizing across various domains, the impact on accuracy is almost 10~\%. 
For this reason, rather than focusing on network categories, we only consider networks by edge density (at least 0.5 or not).

\paragraph{Accuracy measures and setup}
For our experiments, the \emph{vertex pruning ratio} is the ratio of the number of vertices removed from the instance to the number of vertices $|V|$ in the original instance.
The \emph{edge pruning ratio} is defined similarly, but for edges instead of vertices.
We say \emph{clique accuracy} is one precisely when the number of \emph{all} maximum cliques of the instance $G$ is equal to the number of all maximum cliques of the reduced instance $G'$ and $\omega(G) = \omega(G')$.

\paragraph{State-of-the-art solvers for MCE}
To our best knowledge, the only publicly available maximum clique solvers able to list all maximum cliques\footnote{For instance, \texttt{pmc}~\cite{Rossi2015b} does not have this feature.} are
\texttt{cliquer}~\cite{Ostergard2002}, based on a branch-and-bound strategy; and \texttt{MoMC}~\cite{Li2017}, introducing incremental maximum satisfiability reasoning to a branch-and-bound strategy. 
We use these solvers in our experiments\footnote{It is worth noticing that in principle, one could solve the problem by any algorithm that lists all \emph{maximal cliques}. However, even  such algorithms solve a more general problem (i.e., every maximum clique is maximal but the opposite is not true in general) which usually comes with a significantly higher computational cost.}.

\subsection{Dense networks}
\label{subs:dense}
In this subsection, we show results for probabilistic preprocessing on dense networks (i.e., edge density at least 0.5).

\begin{landscape}
\begin{table*}[t]
\centering
\small
\caption{Experiments for dense graphs. The column ``$\omega$'' is the max.\ clique size and the column ``n.\ $\omega$'' is the number of such cliques. In both, * means the quantity is preserved in the preprocessed instance; otherwise the new quantity is in parenthesis.
The multicolumns ``$k$-core'' and ``1-stage'' give the vertex pruning ratio followed by the edge pruning ratio when preprocessed by removing vertices of core number $< \omega - 1$ and our preprocessor, respectively.
For the last three columns, all runtimes are in seconds averaged over three independent runs. The column ``Pruning'' is the time for feature computation \emph{and} pruning. The two remaining columns give the runtime of a solver, containing the runtime on the pruned instance with the speedup obtained in parenthesis. We denote by \texttt{t/o} killed execution after an hour and --- denotes no speedup.}
\def\arraystretch{0.75}
\label{tbl:dense}
\begin{tabular}{lrrrrrrrrrrr}
\toprule
Instance & $|V|$ & $|E|$ & $\omega$ & n.\ $\omega$ & \multicolumn{2}{c}{$k$-core} & \multicolumn{2}{c}{1-stage} & Pruning & \texttt{cliquer} & \texttt{MoMC} \\
\midrule
brock200-1 & 200 & 14.8 K & 21 (20) & 2 (16) & --- & --- & \textbf{0.34} & \textbf{0.55} & $<$0.01 & \textbf{0.39 (53.07)}& 0.04 (44.57) \\
keller4 & 171 & 9.4 K & \textbf{11*} & 2304 (37) & --- & --- & \textbf{0.30} & \textbf{0.50} & $<$0.01 & \textbf{$<$0.01 (38.11)} & 0.02 (5.68) \\
\rowcol keller5 & 776 & 226 K & \textbf{27*} & 1000 (5) & --- & --- & \textbf{0.28} & \textbf{0.48} & 0.19 & \texttt{t/o} & \textbf{1421.24 ($>$2.53)} \\
p-hat300-3 & 300 & 33.4 K & \textbf{36*} & \textbf{10*} & --- & --- & \textbf{0.38} & \textbf{0.58} & 0.02 & \textbf{87.1 (9.12)} & 0.05 (6.00) \\
p-hat500-3 & 500 & 93.8 K & \textbf{50*} & 62 (40) & --- & --- & \textbf{0.34} & \textbf{0.52} & 0.07 & \texttt{t/o} & \textbf{2.51 (5.98)} \\
\rowcol p-hat700-1 & 700 & 61 K & \textbf{11*} & \textbf{2*} & --- & --- & \textbf{0.36} & \textbf{0.47}  & 0.03 & 0.08 (1.22) & \textbf{0.05 (1.30)} \\
p-hat700-2 & 700 & 121.7 K & \textbf{44*} & \textbf{138*} & --- & --- & \textbf{0.36} & \textbf{0.45} & 0.11 & \texttt{t/o} & 1.35 (---)  \\
p-hat1000-1 & 1 K & 122.3 K & \textbf{10*} & 276 (165) & --- & --- & \textbf{0.36} & \textbf{0.47}  & 0.08 & \textbf{0.86 (2.22)} & 0.71 (1.67) \\
\rowcol p-hat1500-1 & 1.5 K & 284.9 K & 12 (11) & 1 (376) & --- & --- & \textbf{0.33} & \textbf{0.43} & 0.25 & 13.18 (---) & \textbf{3.2 (1.54)} \\
fp & 7.5 K & 841 K & \textbf{10*} & \textbf{1001*} & --- & --- & \textbf{0.06} & \textbf{0.29} & 0.36 & 0.65 (---) & \textbf{5.19 (1.13)} \\
nd3k & 9 K & 1.64 M & \textbf{70*} & \textbf{720*} & --- & --- & \textbf{0.23} & \textbf{0.28} & 1.28 & \texttt{t/o} & \textbf{7.05 (1.09)} \\
\rowcol raefsky1 & 3.2 K & 291 K & \textbf{32*} & 613 (362) & --- & --- & \textbf{0.33} & \textbf{0.38} & 0.11 & 2.80 (---) & \textbf{0.31 (1.36)} \\
HFE18\_96\_in & 4 K & 993.3 K & \textbf{20*} & \textbf{2*} & $<$1e-4 & $<$1e-4 & \textbf{0.26} & \textbf{0.27} & 0.49 & 58.88 (1.05) & \textbf{4.30 (1.18)} \\
heart1 & 3.6 K & 1.4 M & \textbf{200*} & 45 (26) & $<$1e-4 & $<$1e-4 & \textbf{0.19} & \textbf{0.25} & 0.66 & \texttt{t/o} & 19.37 (---) \\
\rowcol cegb2802 & 2.8 K & 137.3 K & \textbf{60*} & 101 (38) & 0.09 & 0.04 & \textbf{0.39} & \textbf{0.46} & 0.09 & 0.05 (---) & \textbf{0.15 (1.61)} \\
movielens-1m & 6 K & 1 M & \textbf{31*} & \textbf{147*} & 0.05 & 0.007 & \textbf{0.22} & \textbf{0.23} & 0.98 & 31.31 (---) & \textbf{2.85 (1.14)} \\
ex7 & 1.6 K & 52.9 K & \textbf{18*} & 199 (127) & 0.02 & 0.01 & \textbf{0.26} & \textbf{0.28}  & 0.04 & 0.01 (---) & \textbf{0.1 (1.29)} \\
\rowcol Trec14 & 15.9 K & 2.87 M & \textbf{16*} & \textbf{99*} & 0.16 & 0.009 & \textbf{0.34} & \textbf{0.15} & 2.19 & 3.62 (---) & 0.35 (---) \\
\bottomrule 
\end{tabular}
\end{table*}
\end{landscape}

\paragraph{Classification framework for dense networks}

For training, we get 4762 feature vectors from our ``dense'' category.
As a baseline, a 4-fold cross validation over this using logistic regression results in an accuracy of \textbf{0.73}.
We improve on this by obtaining an accuracy of \textbf{0.81} with gradient boosted trees (further details omitted), found with the help of \texttt{auto-sklearn}~\cite{autosklearn}.

\paragraph{Search strategies}
Given the empirical hardness of dense instances, one should not expect a very high accuracy with polynomial-time computable features such as (F1)-(F10).
For this reason, we set the confidence threshold $q=0.98$ here.

\paragraph{The failure of $k$-core decomposition on dense graphs}
It is common that widely-adopted preprocessing methods like the $k$-core decomposition cannot prune any vertices on a dense network $G$, even if they had the computationally expensive knowledge of $\omega(G)$.
This is so because the degree of each vertex is higher than than the maximum clique size $\omega(G)$.

We showcase precisely this poor behaviour in Table~\ref{tbl:dense}.
For most of the instances, the $k$-core decomposition with the exact knowledge of $\omega(G)$ cannot prune any vertices.
In contrast, the probabilistic preprocessor prunes typically around 30 \% of the vertices and around 40 \% of the edges.

\paragraph{Accuracy}
Given that around 30 \% of the vertices are removed, how many mistakes do we make?
For almost all instances we retain the clique number, i.e., $\omega(G') = \omega(G)$, where $G'$ is the instance obtained by preprocessing $G$ (see column ``$\omega$'' in Table~\ref{tbl:dense}).
In fact, the only exceptions are \mbox{brock200-1} and \mbox{p-hat1500-1}, for which $\omega(G') = \omega(G) - 1$ still holds.
Importantly, for about half of the instances, we retain \emph{all} optimal solutions.

\paragraph{Speedups}
We show speedups for the solvers after executing our pruning strategy in Table~\ref{tbl:dense} (last two columns).
We obtain speedups as large as 53x and for 38x \mbox{brock200-1} and \mbox{keller4}, respectively.
This might not be surprising, since in both cases we lose some maximum cliques (but note that for \mbox{keller4}, the size of a maximum clique is still retained).
For \mbox{p-hat300-3}, the preprocessor makes no mistakes, resulting in speedups of upto 9x.
The speedup for \mbox{keller5} is \emph{at least} 2.5x, since the original instance was not solved within 3600 seconds, but the preprocessed instances was solved in roughly 1421 seconds.

Most speedups are less than 2x, explained by the relative simplicity of instances.
Indeed, it seems challenging to locate dense instances of MCE that are (i) structured and (ii) solvable within a reasonable time.

\subsection{Sparse networks}
\label{subs:sparse}
In this subsection, we show results for probabilistic preprocessing on sparse networks (i.e., edge density below 0.5).

\paragraph{Classification framework for sparse networks} We use logistic regression trained with stochastic gradient descent.
We use a standard L2 regularizer, and use 0.0001 as the regularization term multiplier determined by a systematic grid search.
The classifier is trained for 400 epochs.

\begin{landscape}
\begin{table*}[t]
\centering
\small
\caption{Experiments for sparse graphs. The columns are precisely as in Table~\ref{tbl:dense}, with the exception that we show pruning ratios for~5 stages.
All ratios are rounded to three decimal places. 
Ratios of~1.000 are between~1 and 0.999.
An \texttt{s} marks a segmentation fault.}
\def\arraystretch{0.75}
\label{tbl:pruning}
\begin{tabular}{lrrrrrrrrrrr}
\toprule
Instance & $|V|$ & $|E|$ & $\omega$ & n.~$\omega$ & \multicolumn{2}{c}{$k$-core} & \multicolumn{2}{c}{5-stage} & Pruning & \texttt{cliquer} & \texttt{MoMC} \\
\midrule
bio-WormNet-v3 & 16 K & 763 K &  \textbf{121*} & \textbf{18*} & 0.868 & 0.602 & \textbf{0.987} & \textbf{0.975} & 0.36 & 0.37 (---) & \textbf{0.40 (3.94)} \\
ia-wiki-user-edits-page & 2 M & 9 M & \textbf{15*} & \textbf{15*} & 0.958 & 0.641 & \textbf{0.997}	& \textbf{0.946} & 1.12 & \textbf{1.16 (29.94)} & \texttt{s} \\
\rowcol rt-retweet-crawl & 1 M & 2 M &   \textbf{13*} &   \textbf{26*} &      0.979 & 0.863 & \textbf{0.997} & \textbf{0.989} & 0.38 & \textbf{0.41 (5.66)} & \texttt{s} \\
soc-digg	& 771 K & 6 M & \textbf{50*} & \textbf{192*} & 0.969 & 0.496 & \textbf{0.998}	& \textbf{0.964} & 4.80 & \textbf{4.91 (1.78)} & \texttt{s} \\
soc-flixster & 3 M & 8 M &    \textbf{31*} &  \textbf{752*}  &      0.986 & 0.834 & \textbf{0.999} & \textbf{0.989} & 1.32 & \textbf{1.41 (3.86)} & \texttt{s} \\
\rowcol soc-google-plus & 211 K & 2 M &   \textbf{66*} &   \textbf{24*} &      0.986 & 0.785 & \textbf{0.998} & \textbf{0.972} & 0.35 & 0.35 (---) & \textbf{0.41 (3.98)} \\
soc-lastfm & 1 M & 5 M &    \textbf{14*} &  330 (324) &      0.933 & 0.625 & \textbf{0.993} & \textbf{0.938} & 2.24 & \textbf{2.57 (10.56)} & \texttt{s} \\
soc-pokec & 2 M & 22 M & \textbf{29*} & \textbf{6*} & 0.824 & 0.595 & \textbf{0.975} & \textbf{0.940} & 17.59 & \textbf{24.40 (45.80)} & \texttt{s} \\
\rowcol      soc-themarker & 69 K & 2 M &   \textbf{22*} &   \textbf{40*} &      0.713 & 0.151 & \textbf{0.972}	& \textbf{0.842} & 2.03 & \textbf{4.95 (---)} & \texttt{s} \\
soc-twitter-higgs	& 457 K & 15 M & \textbf{71*} & \textbf{14*} & 0.852 & 0.540 & \textbf{0.986}	& \textbf{0.943} & 9.52 & \textbf{9.85 (1.92)} & \texttt{s} \\
 soc-wiki-Talk-dir & 2 M & 5 M &   \textbf{26*} &  \textbf{141*} &      0.993 & 0.830 & \textbf{0.999} & \textbf{0.970} & 1.09 & \textbf{3.47 (1.25)} & \texttt{s} \\
\rowcol      socfb-A-anon & 3 M & 24 M &   \textbf{25*} &   \textbf{35*} &      0.879 & 0.403 & \textbf{0.984} & \textbf{0.907} & 28.49 & \textbf{38.05 (55.95)} & \texttt{s} \\
      socfb-B-anon & 3 M & 21 M &   \textbf{24*} &  \textbf{196*} &     0.884 & 0.378 & \textbf{0.986} & \textbf{0.920} & 28.33 & \textbf{35.49 (67.46)} & \texttt{s} \\
     socfb-Texas84 & 36 K & 2 M &  \textbf{51*} &   \textbf{34*} &      0.540 & 0.322 & \textbf{0.957} & \textbf{0.941} & 1.04 & \textbf{1.07 (1.32)} & \texttt{s} \\
\rowcol tech-as-skitter & 2 M & 11 M &   \textbf{67*} &    \textbf{4*}  &      0.997 & 0.971 & \textbf{1.000} & \textbf{0.998} & 0.28 & 0.28 (---) & \textbf{0.36 (4.31)} \\
   web-baidu-baike & 2 M & 18 M & \textbf{31*} & \textbf{4*} & 0.933 & 0.618 & \textbf{0.992} & \textbf{0.934} & 9.67 & \textbf{11.00 (7.48)} & \texttt{s} \\
    web-google-dir & 876 K & 5 M &   \textbf{44*} &    \textbf{8*} &      1.000 & 0.999 & \textbf{1.000} & \textbf{1.000} & $<$ 0.00 & $<$ 0.00 (---) & \textbf{$<$ 0.00 (2.06)} \\
\rowcol        web-hudong & 2 M & 15 M &  267 (266) &   59 (1) &      1.000 & 0.996 & \textbf{1.000} & \textbf{0.997} & 0.09 & 0.10 (---) & \textbf{0.1 (9.99)} \\
 web-wikipedia2009 & 2 M & 5 M &   \textbf{31*} &    \textbf{3*} & 0.999 & 0.988 & \textbf{1.000} & \textbf{1.000} & 0.03 & 0.03 (---) & \textbf{0.03 (4.28)} \\
\bottomrule
\end{tabular}
\end{table*}
\end{landscape}

\paragraph{Implementing the $k$-core decomposition}
Recall the exact state-of-the-art preprocessor: (i) use a heuristic to find a large clique (say of size $k$) and (ii) delete every vertex of $G$ of core number less than $k-1$.
For sparse graphs, a state-of-the-art solver \texttt{pmc} has been reported to find large cliques, i.e., typically $k$ is at most a small additive constant away from $\omega(G)$\footnote{A table of results seen at \url{http://ryanrossi.com/pmc/download.php}}.
Further, given that some real-world sparse networks are scale-free (many vertices have low degree) the $k$-core decomposition can be effective in practice.

To ensure highest possible prune ratios for the $k$-core decomposition method, we supply it with the number $\omega(G)$ instead of an estimate provided by any real-world implementation.
This ensures \emph{ideal conditions}: (i) the method always prunes as aggressively as possible, and (ii) we further assume its execution has zero cost. We call this method the \emph{$\omega$-oracle}.

\paragraph{Test instance pruning}
Before applying our preprocessor on the sparse test instances, we prune them using the $\omega$-oracle.
This ensures that the pruning we report is highly non-trivial, while also speeding up feature computation.

\paragraph{Search strategies}
We experiment with the following two multi-stage search strategies:
\begin{itemize}
\item \emph{Constant confidence (CC):} at every stage, perform probabilistic preprocessing with confidence threshold $q$.
\item \emph{Increasing confidence (IC):} at the first stage, perform probabilistic preprocessing with confidence threshold $q$, progressing $q$ by $d$ for every later stage.
\end{itemize}
Our goal is two-fold: to find (i) a number of stages $\ell$ and (ii) parameters $q$ and $d$, such that the strategy never errs while pruning as aggressively as possible.
We do a systematic search over parameters $\ell$, $q$, and $d$. 
For the CC strategy, we let $\ell \in \{1,2,\ldots,8\}$ and $q \in \{ 0.55, 0.6, \ldots, 0.95 \}$.
For the IC strategy, we try $q \in \{ 0.55, 0.60, 0.65 \}$, $d = 0.05$, and set $\ell$ so that in the last stage the confidence is 0.95.

We find the CC strategy with $q = 0.95$ to prune the highest while still retaining all optimal solutions.
Thus, for the remaining experiments, we use a CC strategy with $q=0.95$.

Our 5-stage strategy outperforms, almost always safely, the $\omega$-oracle (see Table~\ref{tbl:pruning}).
In particular, note that even if the difference between the vertex pruning ratios is small, the impact for the number of edges removed can be considerable (see e.g., all instances of the ``soc'' category).
We note that the runtime is not sensitive to the number of stages $\ell$. In fact, already the first step of pruning makes the graph so small that further stages add comparatively very small amounts to the overall runtime.


\paragraph{Speedups}
We show speedups for the solvers in Table~\ref{tbl:pruning}.
We use as a baseline the solver executed on an instance pruned by the $\omega$-oracle, which renders many of the instances easy already.
Most notably, this is \emph{not} the case for \mbox{soc-pokec}, \mbox{socfb-A-anon}, and \mbox{socfb-B-anon}, all requiring at least 5 minutes of solver time.
The largest speedup is for \mbox{socfb-B-anon}, where we go from requiring 40 minutes to only 7 seconds of solver time.
For \texttt{MoMC}, most instances report a segmentation fault (marked with an \texttt{s}) for an unknown reason.


\subsection{Edge-based classification}
For edges, we do a similar training as that described for vertices.
For the category ``dense'', we obtain 79472 feature vectors.
Further, for this category, the edge classification accuracy is \textbf{0.83}, which is 1~\% higher than the vertex classification accuracy using the same classifier as in Subsection~\ref{subs:dense}. However, we note that the edge feature computation is noticeably slower than that for vertex features.
This reason combined with the fact that the classification accuracy is almost the same, we omit further experiments with the edge features due to smaller speedups.

\subsection{Model analysis}
\label{subs:analysis}

\begin{figure*}[t]
\centering
	\begin{tabular}{cc}
	\hspace*{0mm}	\includegraphics[width=0.5\columnwidth,keepaspectratio]{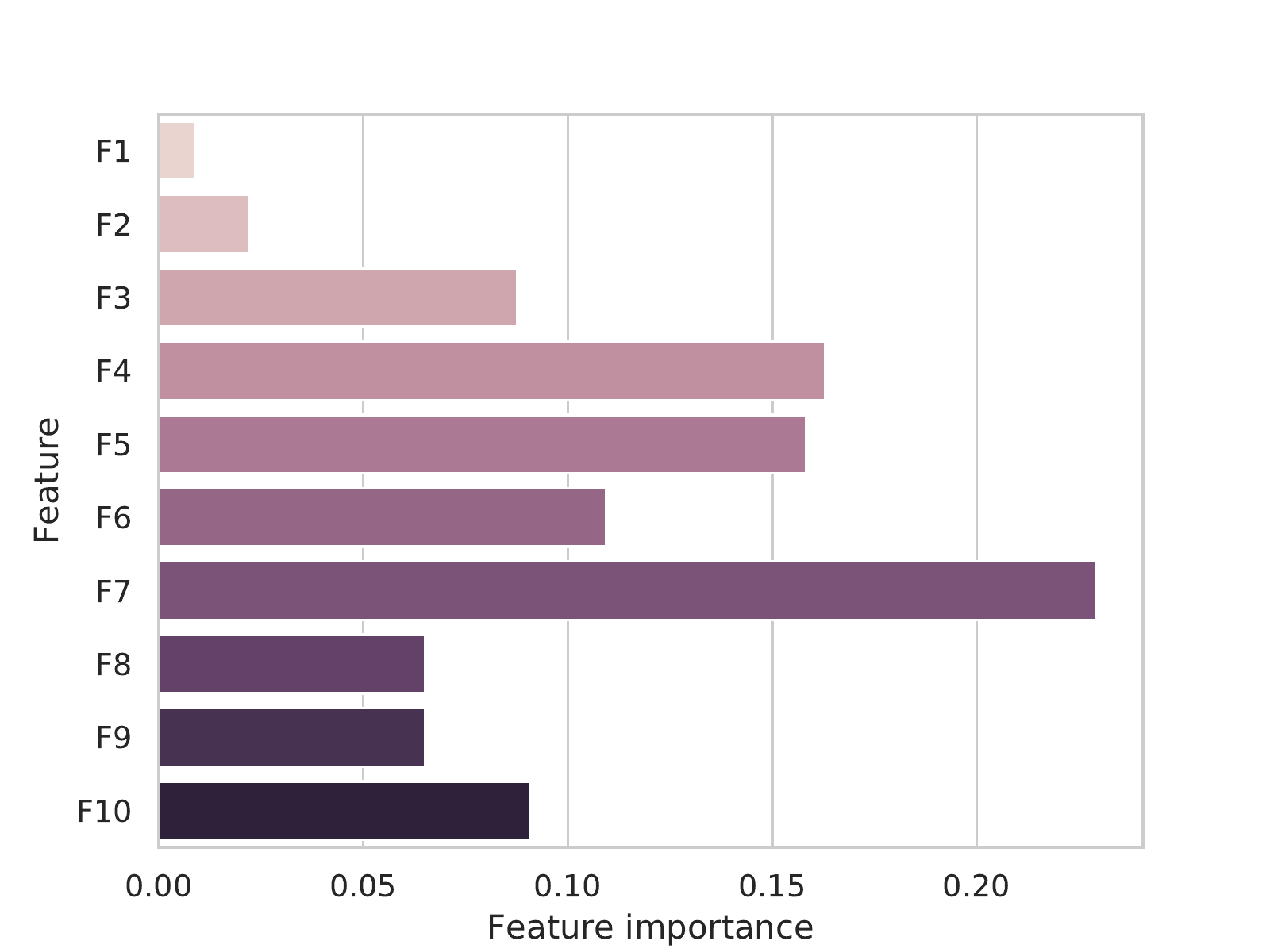} &
	\hspace*{0mm}	\includegraphics[width=0.5\columnwidth,keepaspectratio]{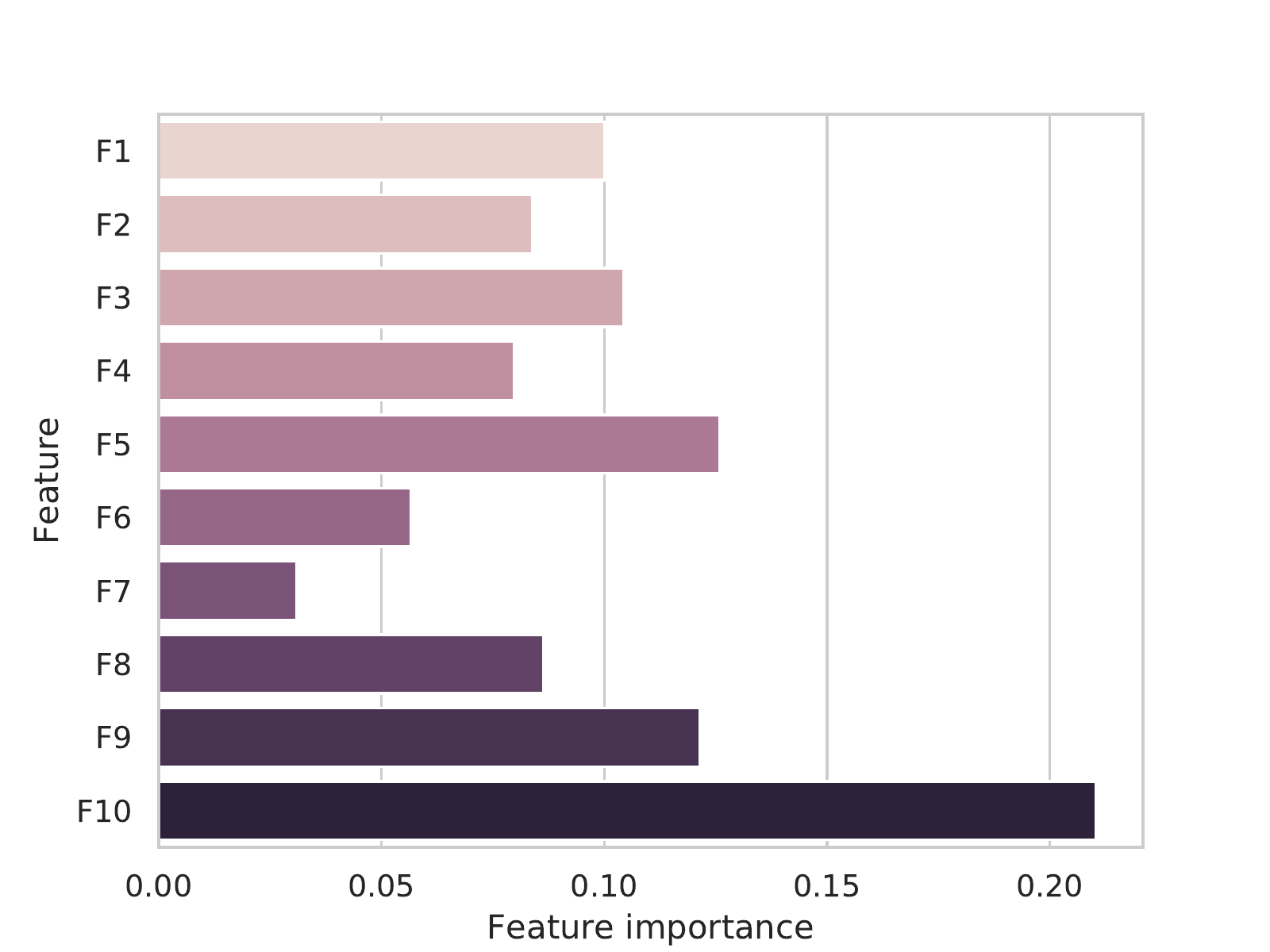} \\
	(a) Dense networks & (b) Sparse networks 
	\end{tabular}
	\caption{The feature importance for \textbf{(a)} dense nets (category ``dense'') and \textbf{(b)} sparse nets (category ``all'').}
	\label{fig:featimp}
\end{figure*}

Gradient boosted trees (used with dense networks in Subsection~\ref{subs:dense}) naturally output feature importances.
We apply the same classifier for the sparse case to allow for a comparison of feature importance.
In both cases, the importance values are distributed among the ten features and sum up to one.

Unsurprisingly, for sparse networks, the local chromatic density (F10) dominates (importance 0.22).
In contrast, (F10) is ineffective for dense networks (importance 0.08), since the chromatic number tends to be much higher than the maximum clique size.
In both cases, (F5) eigencentrality has relatively high importance, justifying its expensive computation.

For dense networks, (F7) average $\chi^2$ over neighbor degrees has the highest importance (importance 0.23), whereas in the sparse case it is least important feature (importance 0.03).
This is so because all degrees in a dense graph are high and the degree distribution tends to be tightly bound or coupled.
Hence, even slight deviations from the expected (e.g., vertices in large cliques) depict high statistical significance scores.
We will capitalize on this observation later on in Section~\ref{sec:althea}.

\section{On supervised learning for hard problems}
\label{sec:planted}
The goal of this section is two-fold: 
(i) to explain the high accuracy of our proposed framework, even when it was trained with small instances, and
(ii) as a consequence, argue that supervised learning is a viable approach for solving structured instances of certain hard problems.

To ensure that the input instances are, at some point, ``structure-free'' we turn to the following heavily-studied variant of the maximum clique problem.
This serves as a representative of the \emph{worst-case input} for our preprocessing strategy.
Also, observe that in case the input graph has a unique maximum clique, MCE is equivalent to finding the (single) maximum clique.
For simplicity, we restrict ourselves to single stage sparsification in these experiments.

\subsection{Planted clique}
\label{subs:plantedc}
In the \emph{planted clique problem}~\cite{Jerrum1992,Kucera1995}, we are given an Erd\H{o}s-R\'{e}nyi random graph $H := G(n,p)$, i.e., an $n$-vertex graph where the presence of each edge is determined independently with probability $p$ (see~\cite{Erdos1959}).
In addition, the problem is parameterized by an integer $k$ such that a random subset of $k$ vertices has been chosen from $H$ and a clique added on it.
On this input, the task is to identify (with the knowledge of the value of $k$) the $k$ vertices containing the planted clique.

The problem is easy for $k \leq \log_2(n)$.
In particular, as shown in~\cite{Bollobas2013}, the clique number of $G(n,p)$ as $n \to \infty$ is almost surely $w$ or $w+1$ where $w$ is the greatest natural number such that
\begin{equation}
\label{eq:clnum}
{n \choose w} p^{w \choose 2} \geq \log(n),
\end{equation}
where $w$ is roughly $2 \log_2 (n)$.
Even when a clique of such size is known to exist (whp), we only know how to find a clique of size $\log_2(n)$ efficiently,\footnote{It is conjectured~\cite{Karp1976,Feldman2017} that there is no polynomial-time algorithm for finding a clique of size $(1+\eps) \log_2 n$ for any $\eps > 0$ in $G(n,1/2)$.} and also solve the problem in polynomial-time when $k$ is large enough.
Specifically, it is known that several algorithmic techniques such as spectral methods (see e.g.,~\cite{Feldman2017} for more) produce efficient algorithms for the problem when $k = \Omega(\sqrt{n})$.

However, settling the complexity of the problem is a notorious open problem when $k$ is between $2 \log_2(n)$ and $\sqrt{n}$. Next, we will focus precisely on this difficult region.

\begin{figure}[t]
\centering
	\begin{tabular}{ccc}
	\hspace*{-4mm}	\includegraphics[width=0.33\columnwidth]{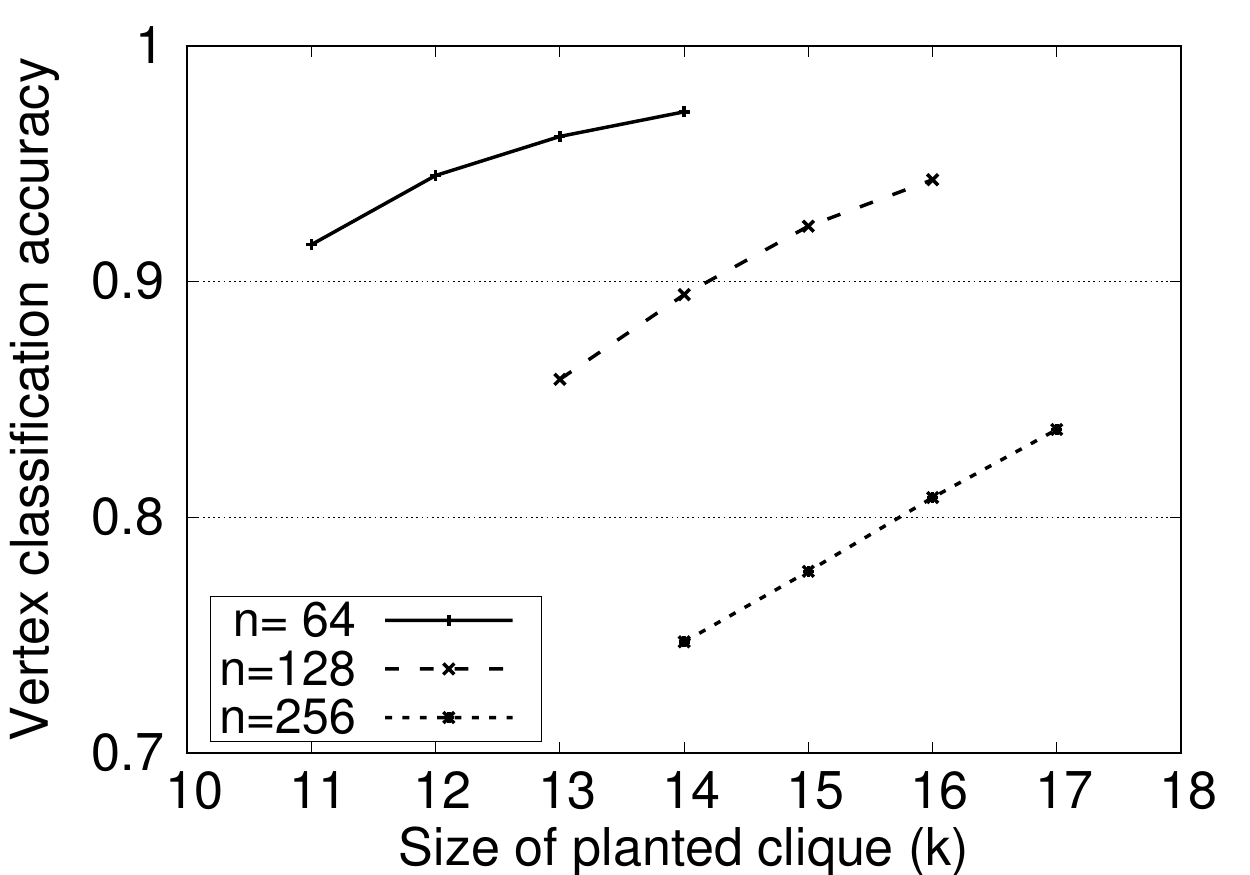} &
	\hspace*{-6mm}	\includegraphics[width=0.33\columnwidth]{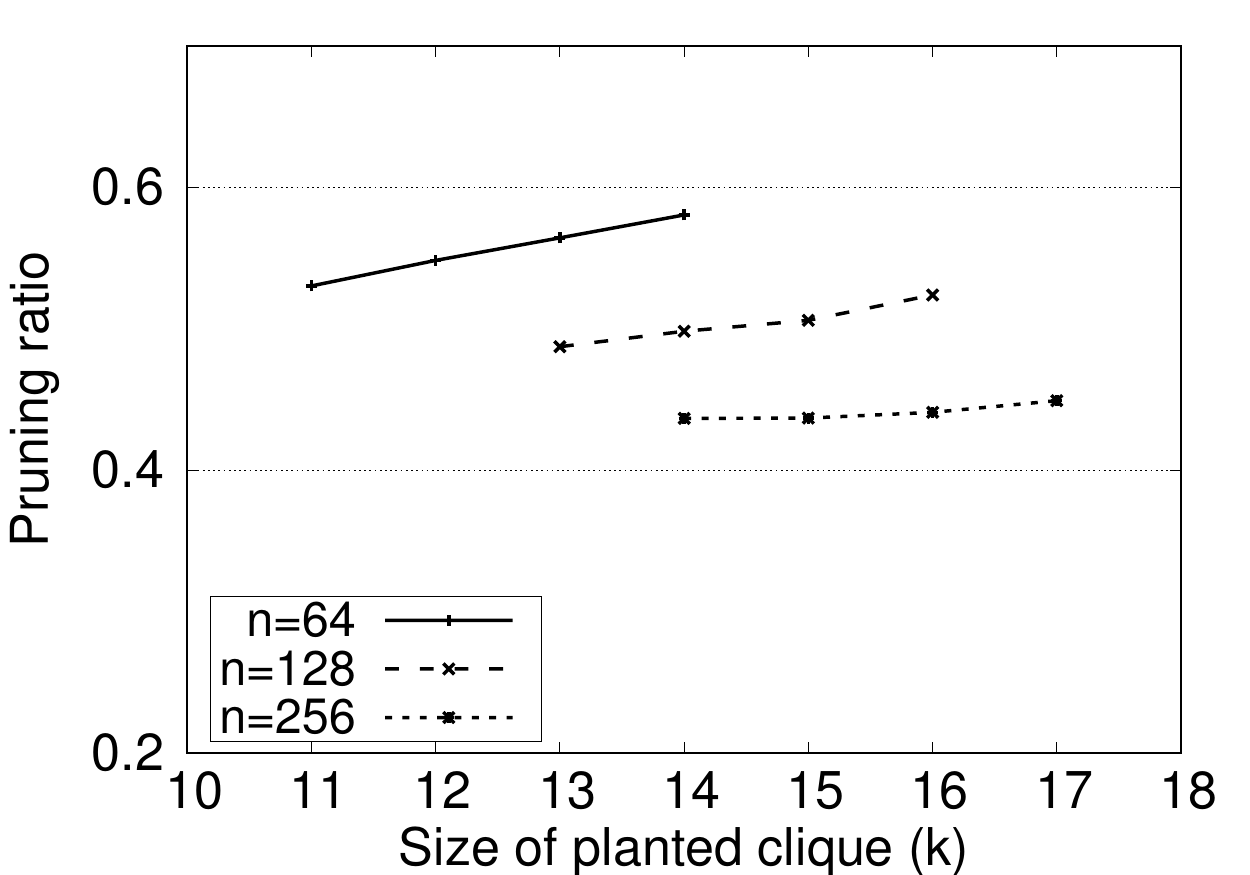} &
	\hspace*{-6mm}	\includegraphics[width=0.33\columnwidth]{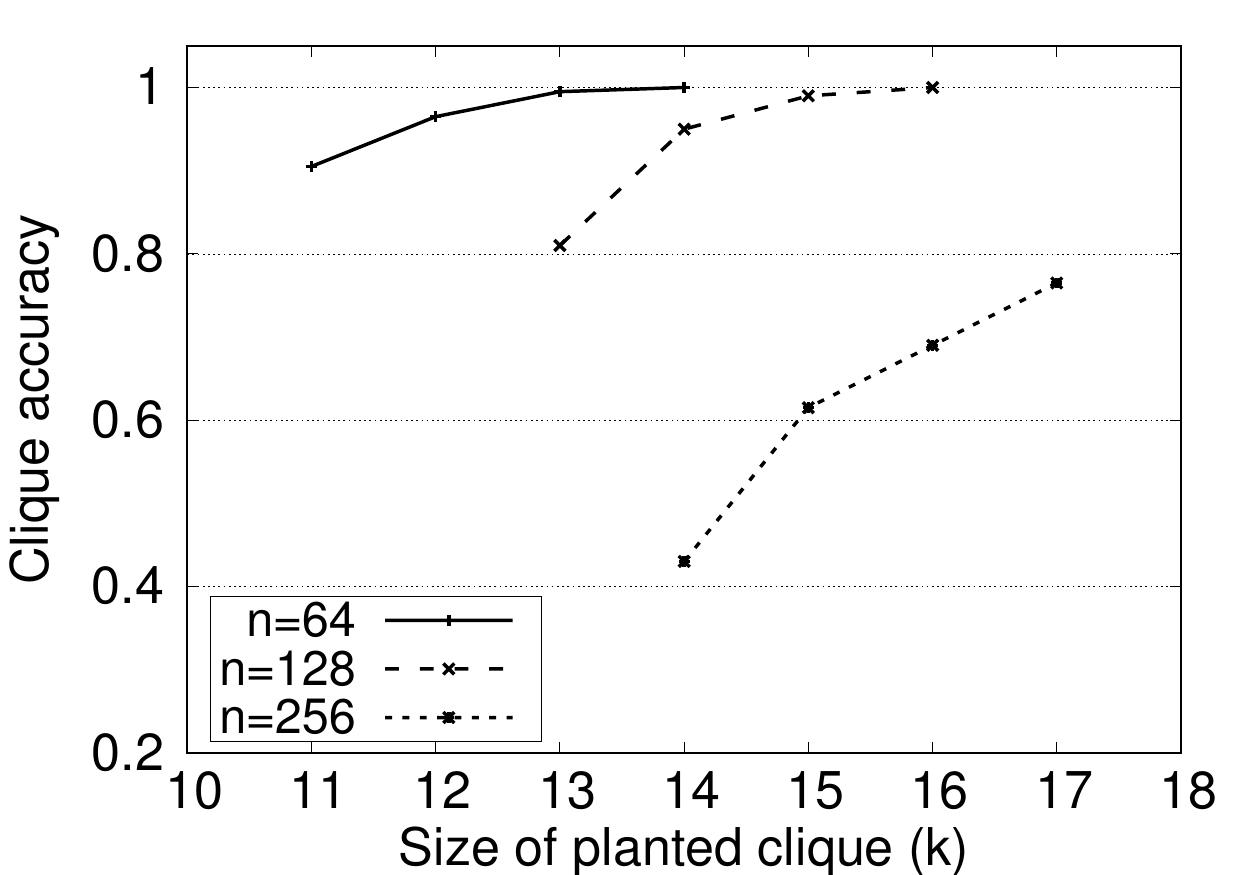} \\
	(a) Vertex acc. & (b) Pruning ratio & (c) Clique acc.
	\end{tabular}
	\caption{The vertex accuracy, pruning ratio, and clique accuracy of our framework when trained with $G(n,1/2)$ with three different parameter pairs $(64,10)$, $(128,12)$, and $(256,13)$. The predictions are for independent, distinct samples with the same $n$, but growing planted clique size~$k$.}
	\label{fig:acc}
\end{figure}

\subsection{Pushing the limits of preprocessing}
\label{subs:limits}
In this subsection, we explore the limits of scalability and robustness of our framework on the planted clique problem.
All experiments are done on an Intel Core i5-6300U CPU (2.4 GHz), 8 GB of RAM, running Ubuntu 16.04, differing only slightly from the earlier hardware configuration.
For all experiments here, we use only the \texttt{igraph} algorithm.

\paragraph{Generation of synthetic data}
We use the \texttt{genrang} utility program~\cite{McKay2014} to sample a random graph~$H := G(n,p)$.
To plant a clique of size $k$, we sample uniformly at random~$k$ vertices, denoted by $K$, from~$H$ and insert all corresponding at most ${k \choose 2}$ missing edges into~$H$.

For each $H$, we compute the features described in Section~\ref{sect:features} with the following differences: we replace (F10) the local chromatic density with the order-four LCC and modify (F8) and (F9) to consider order-four LCC instead of the LCC.
This brings more predictive power while still remaining computationally feasible for small graphs.
The values $E_i$ in Equation~\ref{eq:chis} for (F6) and (F7) are the expected degree $n \cdot p$, while for (F8) and (F9) they are the expected order-$k$ LCC given as ${n - 1 \choose k-1} p^{k \choose 2} \cdot {1} / {np \choose k-1}$. 
To ensure a balanced dataset, we sample (i) $k$ label-0 examples from $K$ and (ii) $k$ label-1 examples from $G \setminus K$, both uniformly at random.

For training, we consider $n \in \{64,128,256,512\}$ because the clique number grows roughly logarithmically with $n$ (see Equation~\ref{eq:clnum}).
We fix $p = 1/2$.
For every $n$, we compute $w$ from Equation~\ref{eq:clnum}, and sample graphs $G(n,p)$ with a planted clique of size $k \in \{w+2,\ldots,w+6\}$ such that each pair $(n,k)$ gives a dataset of size at least 100~K feature vectors.
When planting a clique of size at least $w+2$, we try to guarantee the existence of a unique maximum clique in the graph.
However, this procedure does not always succeed due to randomness, but we do not discard such rare outcomes. 

\begin{figure}[t]
\centering
	\begin{tabular}{ccc}
	\includegraphics[width=0.28\columnwidth]{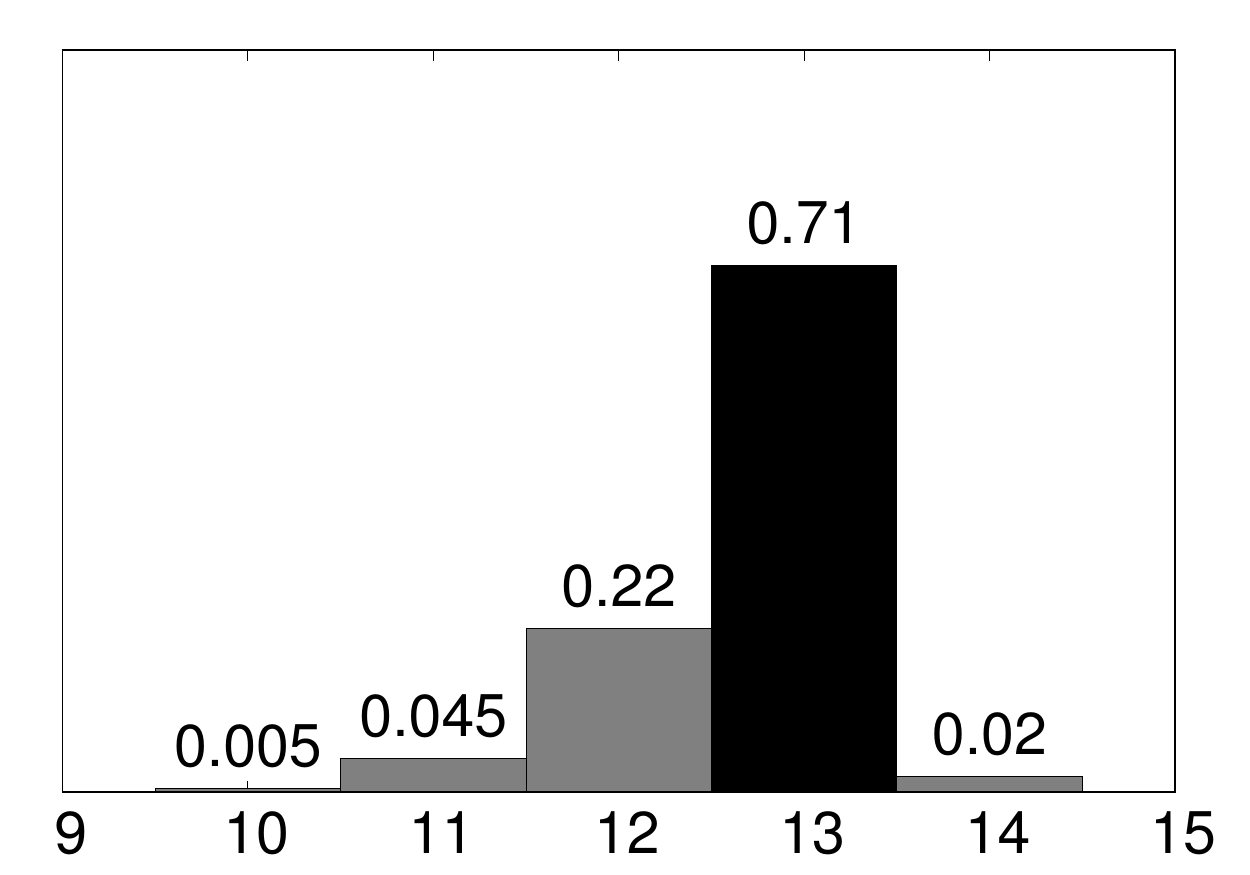} &
	\includegraphics[width=0.28\columnwidth]{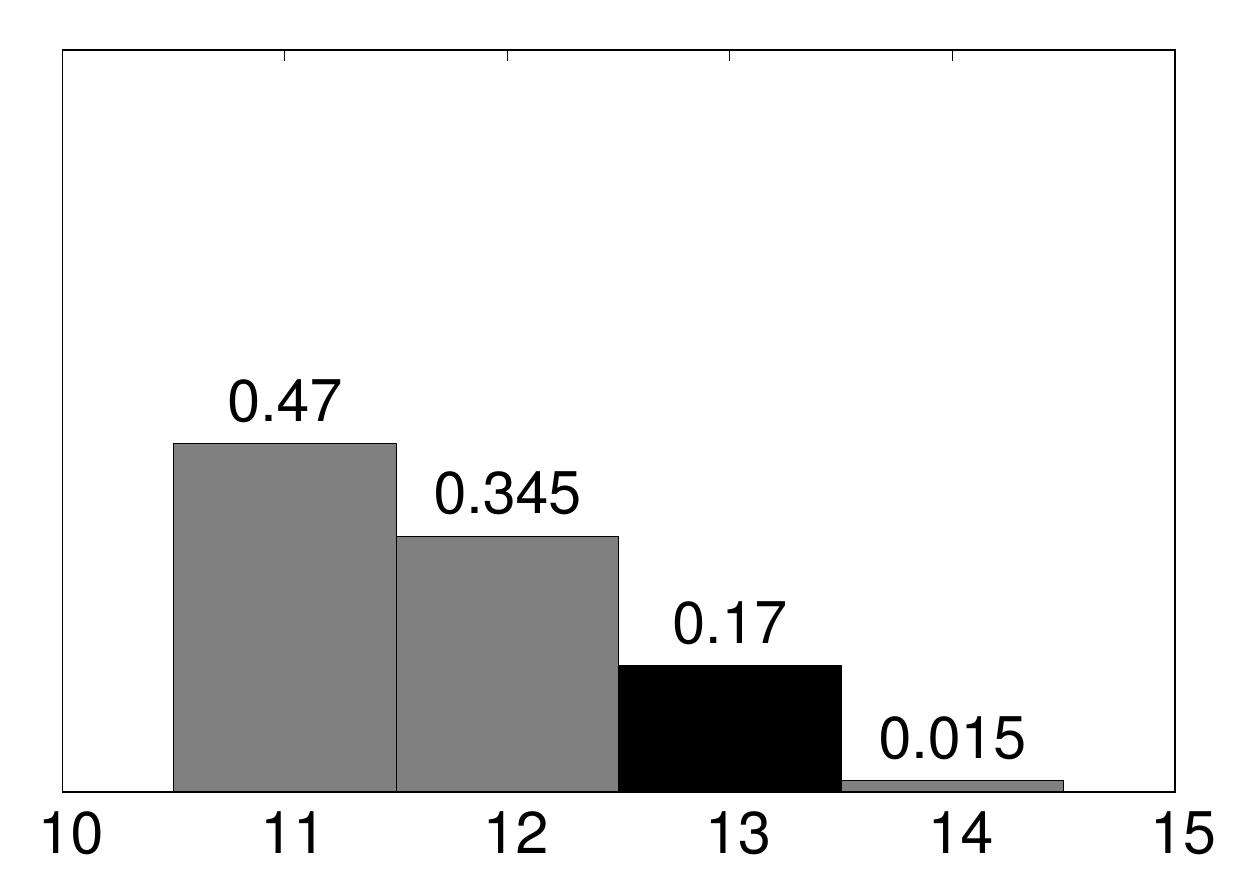} &
	\includegraphics[width=0.28\columnwidth]{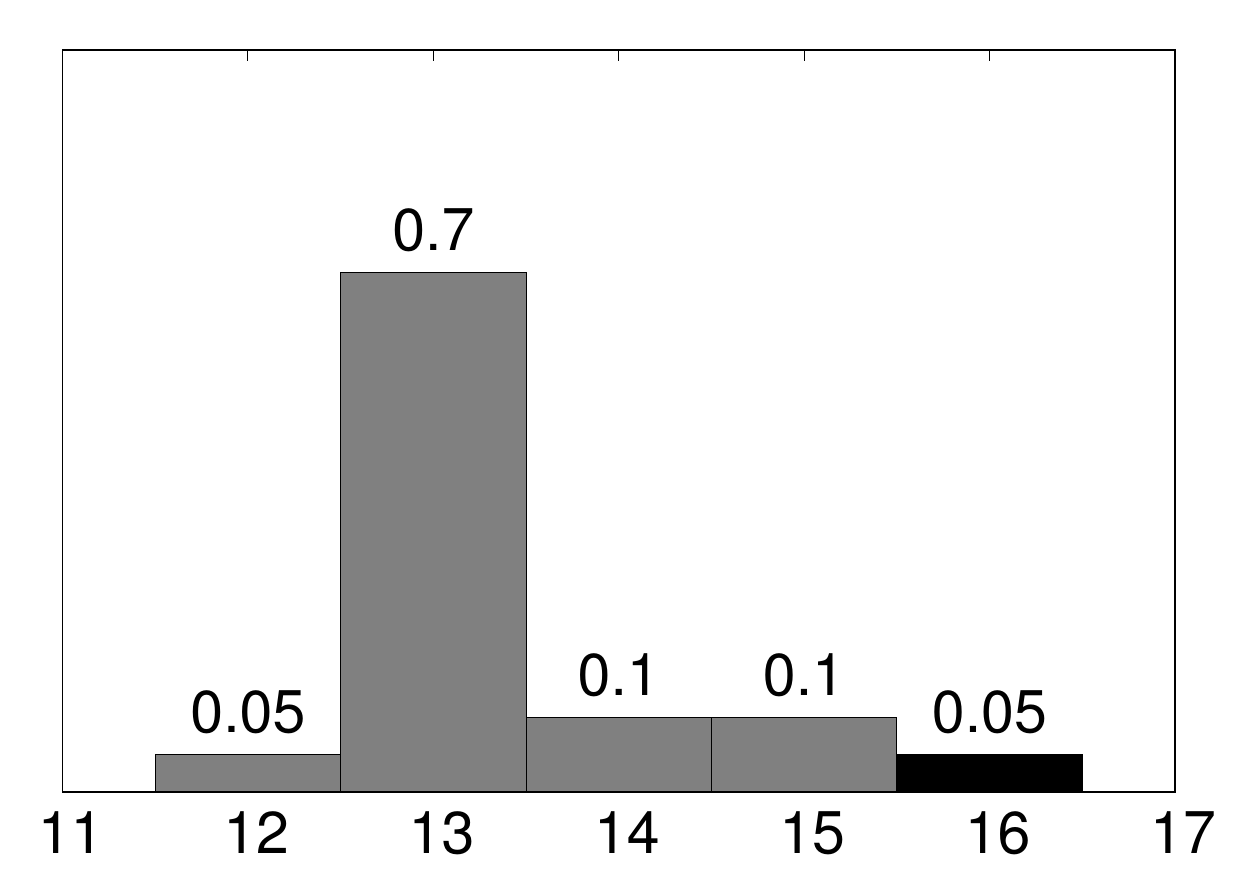} \\
	(a) $(128,13)$ & (b) $(256, 13)$ & (c) $(512, 16)$
	\end{tabular}
	\caption{Distribution of extracted maximum clique size, with black bars denoting the size of the planted clique. Both (a) and (b) are over 200 samples, while (c) is over 20 samples. In each, the predicting classifier has been trained with 64-vertex random graphs with a planted clique of size 10.}
	\label{fig:cldist}
\end{figure}

\begin{table*}
  \centering
  \small
  \caption{Robustness and speedups with fixed $n$ and increasing $k$. The leftmost two columns show the data $(n,k)$ used to train a classifier $P$. For each planted clique size $k+1$, $k+2$, and $k+3$, we show the average pruning ratio (column ``Pr.''), the average clique accuracy (column ``Acc.'), the average runtime of \texttt{igraph} on the reduced instance obtained from our framework using $P$ (column ``Time~(s)''), and the average speedup over executing the same algorithm on the original instance.}
  \label{tbl:robust}
  \def\arraystretch{1}
  \setlength{\tabcolsep}{3pt}
  \begin{tabular}{*{111}{l}}
    \toprule
    & & & \multicolumn{2}{c}{$k+1$} & & & \multicolumn{2}{c}{$k+2$} & & & \multicolumn{2}{c}{$k+3$} \\
    \cmidrule(lr){3-6}
    \cmidrule(lr){7-10}
    \cmidrule(lr){11-14}
    $n$ & $k$ & Pr. & Acc. & Time & Speedup & Pr. & Acc. & Time & Speedup & Pr. & Acc. & Time & Speedup \\
    \midrule
    \phantom{0}64 & 10 & 0.530 & 0.905 & 0.068 & 0.132 & 0.548 & 0.965 & 0.068 & 0.135 & 0.564 & 0.995 & 0.068 & 0.135 \\
    \rowcol 128 & 12 & 0.506 & 0.710 & 0.301 & 0.759 & 0.517 & 0.875 & 0.296 & 0.774 & 0.525 & 0.935 & 0.297 & 0.784 \\
    256 & 13 & 0.489 & 0.170 & 3.261 & 3.264 & 0.493 & 0.190 & 3.233 & 3.304 & 0.493 & 0.310 & 3.260 & 3.315 \\
    \rowcol 512 & 15 & 0.492 & 0.05 & 70.587 & 12.994 & 0.492 & 0.05 & 70.086 & 12.816 & 0.491 & 0.100 & 70.562 & 12.722 \\
    \bottomrule
  \end{tabular}
\end{table*}

\paragraph{Vertex classification accuracy}
We study the accuracy of our classifiers for distinguishing vertices that are and are not in a maximum clique.
Specifically, we train a classifier for each pair $(n,k) \in \{ (64,10), (128, 12), (256, 12) \}$, and test for unseen graphs with the same $n$ but growing planted clique size $k' = k + 1, \ldots, k+4$.
The results are shown in Figure~\ref{fig:acc}~(a).
As expected, the classification task becomes easier once $k'$ increases.
This is also supported the fact that multiple algorithms solve the planted clique problem in polynomial-time for large enough $k'$ (see Subsection~\ref{subs:plantedc}).
In addition, as $n$ grows larger, we see accuracy deterioration caused by the converge of the local properties towards their expected values.
Especially for small values of $k'$, the injection of the planted clique is not substantial enough to cause significant deviations from the expected values.

\paragraph{Pruning ratio and clique accuracy}
We study the effectiveness of our framework as a probabilistic preprocessor for the planted clique instances.
We fix the confidence threshold $q = 0.55$ and use the same set of classifiers and test data.
The average pruning ratios over all instances are shown in Figure~\ref{fig:acc}~(b).
We see pruning ratios as high as at most~0.6, while always discarding more than 40~\% of the vertices.

Now, it is possible that $P$ makes an erroneous prediction causing the deletion of a vertex, which in turn lowers the size of a maximum clique in the instance.
The average clique accuracies over all instances are shown in Figure~\ref{fig:acc}~(c).
Here, we see that for $n=256$, the vertex accuracy (Figure~\ref{fig:acc}~(a)) is still above 0.7, but the clique accuracy drops to above 0.4.
As the vertex accuracy decreases, the probability of deleting a vertex present in a maximum clique increases, translating to a higher chance of error in extracting a maximum clique.
However, while not completely error-free, we observe that even in the case of $(256,13)$ we always delete at most two members of a maximum clique, whereas in the case of $(512,16)$, 95~\% of the time, we extract a maximum clique of size at least~13 (see Figure~\ref{fig:cldist}).

\paragraph{Robustness and speedups}
The robustness and speedups obtained using the \texttt{igraph} algorithm are given in Table~\ref{tbl:robust}.
Here, the clique accuracy and runtime are obtained as the average over 200 samples for each $(n,k)$ except for $(n,k) = (512,15)$ for which there are 20 independent samples.
We see the drop in clique accuracy when a classifier $P$ is trained with $(n,k) \in \{ (256,12), (512,15) \}$ and is predicting for the same $n$ but increasing~$k$.
The clique accuracy is a strict measure, so to quantify the severeness of the erroneous predictions made by $P$ we show the distributions of the extracted maximum clique sizes in Figure~\ref{fig:cldist} for some pairs $(n,k)$.
Again, we observe the effects of growing $n$ causing the convergence of local properties, consequently decreasing the predictive power of $P$.
For $(n,k) = (128,13)$, 73 \% of the runs still produce an optimal solution (here, one can also observe the rare event of having a maximum clique of size~14 when the planted clique was of size~13).

\paragraph{The case for supervised learning on intractable problems}

\begin{table}[t]
  \centering
  \small
  \caption{Deviation in vertex classification accuracy.}
  \label{tbl:retrain}
  \def\arraystretch{1}
  \setlength{\tabcolsep}{4pt}
  \begin{tabular}{llcc}
    \toprule
    $n$ & $k$ & Trained acc.\ & Rob.\ acc.\ \\
    \midrule
    128 & 12 & 0.858 & 0.844 \\
    \rowcol 256 & 13 & 0.747 & 0.728 \\
    512 & 15 & 0.678 & 0.665 \\
    \bottomrule
  \end{tabular}
\end{table}

As $n$ grows, the instances get increasingly time-consuming to solve even for state-of-the-art solvers for suitable $k$, as there is no exploitable structure.
Consequently, obtaining optimally labeled data becomes practically impossible for large enough $n$.
However, in our experiments, we find that random graphs with $n=64$ and $k=10$ are representative of the input for moderately larger graphs as well, up to a point.
Further, obtaining the optimal label for such small graphs is fast.

{We show the deviation in vertex classification accuracy in Table~\ref{tbl:retrain}.}
The column ``Trained acc.'' corresponds to the accuracy of the classifier trained with the values $n$ and $k$ mentioned in the two first columns, while the column ``Rob.\ acc.'' is the accuracy of a classifier trained with smaller $(n,k) = (64,10)$ instances, and predictions are made for the specified $(n,k)$ with planted clique size $k + 1$.
A key observation is that the difference between the two accuracies in a single row in Table~\ref{tbl:retrain} is small enough not to warrant training on larger instances.
This offers an explanation for the perfect clique accuracy with limited training, observed earlier for sparse real-world networks.
This observation reduces the need of labeling costly data points for training.

\section{ALTHEA: a novel clique-finding heuristic for dense graphs}
\label{sec:althea}
In this section, we capitalize on the observation we made in Subsection~\ref{subs:analysis}.
In particular, we describe a heuristic we call {\em ALTHEA} for extracting an approximate maximum clique from a \emph{simple} input graph $G=(V,E)$.

\subsection{Description of ALTHEA}
{\em ALTHEA} hinges on categorizing the {\em degree} of each vertex in $G$ based on its deviation from the average degree of $G$. 
Each vertex is subsequently represented by a sequence of category symbols encoding its neighbourhood, which are then used for computing its statistical significance score. Any vertex depicting the 
maximum $\chi^2$ value (along with its neighbourhood) forms a candidate region for containing a maximum clique in $G$. 
{\em ALTHEA} comprises the following 5 steps.

\paragraph{1. Initialization} We compute the following three degree characteristics of $G$.
\begin{itemize}
\item (i) ${\it \Delta(G)}$: the maximum degree of any vertex in $G$,
\item (ii) ${\it \mathsf{a}(G)}$: the average degree of the vertices in $G$; and
\item (iii) ${\it \sigma(G)}$: the standard deviation of the vertex degrees of $G$.
\end{itemize}
Formally, we define
\begin{equation}
\Delta(G) = \max \{\deg(v): \forall v \in V\}, 
\end{equation} 
\begin{equation}
\mathsf{a}(G) = \frac{\sum_{\forall v \in V} \deg(v)}{|V|},
\end{equation}
and
\begin{equation}
\sigma(G) = \sqrt{\frac{\sum_{\forall v \in V} (\deg(v) - \mathsf{a}(G))^2}{|V| - 1}},
\end{equation}
where $\deg(v)$ is degree of vertex $v$.

\paragraph{2. Symbol categorization} {\em ALTHEA} captures the nature of vertex degree deviation (in the number of standard deviations) 
from the underlying degree distribution of $G$. The number of {\em category symbols} $\tau^G$ is 
$\lceil (\Delta(G) - \mathsf{a}(G))/\sigma(G) \rceil + 1$.
The obtained set of category symbols is $\Gamma^G = \{\gamma_1, \gamma_2, \ldots, \gamma_{\tau^G}\}$, where $\gamma_i$ 
is the multiple of $\sigma(G)$ by which the degree of $v_i$ deviates from $\mathsf{a}(G)$.
Next, we compute the expected probability of occurrence for the symbols in $\Gamma^G$. 

To this end, we use {\em Chebyshev's inequality}~\cite{cheby}, 
which for a random variable $X$ and a real number $k > 0$ states that $\Pr(|X-\mu| \geq k\delta) \leq 1/k^2$, where $\mu$ and $\delta$ are the mean and standard deviation, respectively, 
of the distribution from which $X$ is drawn. Thus, the occurrence probability of $\gamma_i$ is given by 
$\Pr(\gamma_i) = 1/i^2 - 1 / (i+1)^2$.

Other tail distribution bounds or domain-dependent probability distributions 
capturing the underlying characteristics of $G$ might also be used depending on the application. This makes {\em ALTHEA} robust to diverse domains, 
applicable to different input distributions.

\paragraph{3. Vertex symbol sequence} 
For each vertex $v \in V$, we extract its \emph{closed neighbourhood} $N[v] = \{v\} \cup \{u : (u,v) \in E\}$. 
The vertex $v$ is then represented 
by a {\em sequence of category symbols} $\Seq(v)$ of length $|N[v]|$ based on the symbol categorization of the degree of the vertices in its neighbourhood $N[v]$. Formally, this is given as
\begin{equation}
\Seq(v) = \{\gamma(u) : \forall u \in N[v]\},
\end{equation}
where $\gamma(u)$ is the unique $\gamma_i \in \Gamma^G$ and $i \in \{1,2, \ldots, \tau^G\}$, for which the inequality 
$i \leq (\deg(u) - \mathsf{a}(G))/\sigma(G) + 1 < i+1$ holds.

\paragraph{4. Statistical significance computation} For each vertex $v$, {\em ALTHEA} computes the $\chi^2$ statistical significance score using $\Seq(v)$ and the associated symbol probabilities.
For each category symbol $\gamma_i \in \Gamma^G$, its expected occurrence count for vertex $v$ is computed 
as $E_{\gamma_i}^v = \Pr(\gamma_i) \times |N[v]|$. 
Similarly, the corresponding {\em observed occurrence count} $O_{\gamma_i}^v$ of the category symbol $\gamma_i$ for $v$ can be obtained from $\Seq(v)$.
Combining the above steps, the statistical significance of $v$ is 
\begin{equation}
\chi^2(v) = \sum_{\forall \gamma_i \in \Gamma^G} \frac{(O_{\gamma_i}^v - E_{\gamma_i}^v)^2}{E_{\gamma_i}^v}.
\end{equation}

\paragraph{5. Approximate maximum clique extraction} After computing the statistical significance of the vertices, {\em ALTHEA}
selects the vertex $v'$ demonstrating the maximum statistical significance (chosen arbitrarily in case of ties), as the best candidate whose neighbourhood contains an 
(approximate) maximum clique for $G$. Intuitively, a vertex and its neighbours that are a part of a maximum clique in $G$ would exhibit the largest variation in the 
degree distribution characteristic compared to the average (or expected) characteristic of $G$, which is captured by the notion of statistical significance.
Finally, the subgraph induced by the neighbourhood $N[v']$ is fed to a maximum clique solver for extracting a large clique of $G$.

\paragraph{Discussion} In a {\em dense graph} the degree of a vertex is high, and the degree distribution tends to be tightly bound (or coupled). Hence, 
even slight deviations from the expected behaviour (in cases of vertices that are a part of large cliques) depict high statistical significance scores. This enables 
{\em ALTHEA} to effectively identify large maximum cliques, as we will experimentally show next.

\begin{landscape}
\begin{table*}[t]
\centering
\small
\caption{Performance comparison of {\em ALTHEA} on real-world datasets.
Here, (i) $\omega$ denotes the maximum clique size and $\omega'$ is the approximate maximum clique size found; (ii) results for approaches that were ``killed'' after~5 minutes of 
run-time (without output) are marked with $-$; (iii) for results marked with $\#$, refer to Table~\ref{tab:real_large} for additional results; (iv) averaged run-times over $5$ runs 
are shown in seconds; and (v) vertex and edge pruning (Pr.) are given in percentage of $|V|$ and $|E|$ respectively.}
\label{tab:real_small}
\def\arraystretch{1}
\setlength{\tabcolsep}{3pt}
	\begin{tabular}{lllllllllllllll}
	\toprule
	\multicolumn{4}{c}{\bf Dataset Characteristics} & \multicolumn{8}{c}{\bf Heuristic Approaches} & \multicolumn{3}{c}{\bf Exact Approaches} \\
	\cmidrule(lr){1-4}
	\cmidrule(lr){5-12}
	\cmidrule(lr){13-15}
	{\em Instance} & {\em $|V|$} & {\em $|E|$} & {\em $\omega$} & \multicolumn{4}{c}{\bf ALTHEA + FMC(H)} & \multicolumn{2}{c}{\bf FMC(H)} & \multicolumn{2}{c}{\bf RMC} & {\bf FMC(E)} & {\bf MoMC} & {\bf igraph} \\
	& & & & {\em Vert.\ Pr.} & {\em Edge Pr.} & {\em Time (s)} & {\em $\omega'$} & {\em Time (s)} & {\em $\omega'$} & {\em Time (s)} & {\em $\omega'$} & {\em Time (s)} & {\em Time (s)} & {\em Time (s)} \\
	\cmidrule(lr){1-4}
	\cmidrule(lr){5-8}
	\cmidrule(lr){9-10}
	\cmidrule(lr){11-12}
	\cmidrule(lr){13-15}
\rowcol bio-WormNet-v3-benchmark & 2 K & 79 K & 126 & 94.48 & 89.83 & 0.00383 & {\bf 126} & 0.0154 & 126 & {\bf 0.00067} & 126 & 0.00564 & 0.496 & 0.239 \\
bn-macaque-rhesus\_inter-cort-netw\_2 & 93 & 2 K & 30 & 68.82 & 82.05 & {\bf 0.000123} & {\bf 30} & 0.00024 & 30 & 0.001 & 30 & 0.00024 & 0.004 & 0.0008 \\
bn-mouse\_retina\_1 & 1 K & 91 K & 51 & 77.74 & 80.30 & {\bf 0.0125} & {\em $39^\#$} & 0.0438 & 35 & 0.184 & {\bf 51} & - & 0.26 & - \\
\rowcol cari & 1 K & 77 K & 200 & 68.28 & 11.19 & 0.783 & {\bf 200} & 0.883 & 200 & {\bf 0.176} & 200 & - & 0.656 & 4.933 \\
cavity26 & 5 K & 71 K & 19 & 98.66 & 99.00 & {\bf 0.00219} & {\bf 19} & 0.00913 & 19 & 0.0381 & 19 & 0.017 & 1.148 & 0.132 \\
econ-psmigr1 & 3 K & 411 K & 144 & 90.16 & 92.20 & {\bf 0.138} & {\em 116} & 0.728 & 114 & - & - & - & 1.38 & - \\
\rowcol frb30-15-1 & 451 & 83 K & 30 & 17.29 & 32.21 & {\bf 0.0734} & {\bf 25} & 0.133 & 25 & 1.0085 & 25 & - & 0.324 & - \\
hamming10-2 & 1 K & 519 K & 512 & 1.171 & 2.14 & 39.70 & {\bf 512} & 42.00 & 512 & {\bf 11.158} & 512 & - & 29.188 & - \\ 
light\_in\_tissue & 29 K & 188 K & 6 & 99.94 & 99.97 & {\bf 0.00773} & {\bf 6} & 0.0122 & 6 & 10.894 & 6 & 0.0306 & - & 0.63 \\
\rowcol nasa2910 & 3 K & 86 K & 36 & 96.6 & 97.48 & {\bf 0.00292} & {\bf 36} & 0.036 & 36 & 0.0272 & 36 & 0.479 & 0.64 & 0.31 \\
robot24c1\_mat5\_J & 405 & 14 K & 24 & 88.89 & 95.49 & {\bf 0.0006} & {\em 21} & 0.00283 & 20 & 0.00313 & {\bf 24} & - & 0.016 & 0.53 \\
scc\_enron-only & 152 & 10 K & 120 & 7.89 & 4.92 & 0.04 & {\bf 120} & 0.0371 & 120 & {\bf 0.002} & 120 & - & 0.336 & 0.0224 \\
\rowcol scc\_infect-dublin & 11 K & 176 K & 84 & 99.12 & 97.57 & {\bf 0.0119} & {\bf 84} & 0.0321 & 84 & 0.021 & 84 & 2.715 & - & 0.864 \\
scc\_twitter-copen & 9 K & 474 K & 581 & 87.74 & 16.80 & 26.223 & {\bf 581} & 30.967 & 581 & {\bf 10.267} & 581 & - & 38.004 & - \\
Trec12 & 3 K & 151 K & 11 & 91.71 & 98.46 & {\bf 0.00376} & {\bf 8} & 0.0381 & 8 & - & - & 21.334 & 0.756 & 8.926 \\
\rowcol polblogs & 2 K & 17 K & 20 & 95.44 & 93.74 & {\bf 0.00107} & {\em $19^\#$} & 0.00354 & 16 & 0.005 & 20 & 0.186 & 0.204 & 0.503 \\
moreno-blogs & 1 K & 1109 K & 1490 & 0.134 & 0.134 & 0.405 & {\bf 1490} & 0.369 & 1490 & 16.75 & 1490 & 1.645 & - & {\bf 0.24} \\
\bottomrule
\end{tabular}
\end{table*}
\end{landscape}

\begin{landscape}
\begin{table}[t]
\centering
\small
\caption{Performance comparison of {\em ALTHEA} on difficult real-world data.
Here, (i) results marked with $*$ denote the maximum clique size found by the heuristic before the cut-off time of $5$ min.; (ii) {\em RMC} reported segmentation fault for 
the {\em econ-psmigr1} network, and is marked with $-$; (iii) average run-times over $5$ runs are shown in sec.; and (iv) vertex and edge pruning (Pr.) are given in \% of $|V|$ and 
$|E|$, respectively.}
\label{tab:real_large}
	\begin{tabular}{lllllllllllll}
	\toprule
	\multicolumn{3}{c}{\bf Dataset Characteristics} & \multicolumn{4}{c}{\bf ALTHEA + FMC(H)} & \multicolumn{2}{c}{\bf ALTHEA + MoMC} & \multicolumn{2}{c}{\bf FMC(H)} & \multicolumn{2}{c}{\bf RMC} \\
	{\em Instance} & {\em $|V|$} & {\em $|E|$} & {\em Vert.\ Pr.} & {\em Edge Pr.} & {\em Time (s)} & {\em $\omega'$} & {\em Time (s)} & {\em $\omega'$} & {\em Time (s)} & {\em $\omega'$} & {\em Time (s)} & {\em $\omega'$} \\
	\cmidrule(lr){1-3}
	\cmidrule(lr){4-7}
	\cmidrule(lr){8-9}
	\cmidrule(lr){10-11}
	\cmidrule(lr){12-13}
\rowcol bio-WormNet-v3 & 16K & 763K & 96.23 & 91.85 & {\bf 0.0914} & {\em 94} & {\em 0.166} & {\bf 121} & 0.465 & 90 & 0.156 & {\bf 121} \\
brock800-1 & 801 & 208K & 35.21 & 58.08 & {\bf 0.088} & 17 & - & {\bf 19}* & 0.289 & 17 & - & 17* \\
C1000-9 & 1001 & 450K & 7.69 & 14.64 & 2.41 & 51 & - & {\bf 59}* & 3.065 & 51 & - & 53* \\
\rowcol econ-psmigr1 & 3141 & 411K & 90.16 & 92.20 & {\bf 0.138} & {\em 116} & {\em 0.615} & {\bf 122} & 0.728 & 114 & - & - \\
frb30-15-1 & 451 & 83K & 17.29 & 32.21 & {\bf 0.0734} & 25 & 0.866 & {\bf 29} & 0.133 & 25 & - & 25* \\
frb50-23-5 & 1151 & 581K & 16.33 & 29.77 & {\bf 2.1} & {\em 42} & - & {\bf 48}* & 3.235 & 41 & - & 40* \\
\rowcol frb53-24-5 & 1273 & 714K & 5.81 & 11.35 & {\bf 3.863} & 42 & - & {\bf 49}* & 4.65 & 42 & - & 42* \\
p-hat1500-3 & 1501 & 847K & 17.19 & 30.40 & {\bf 2.725} & 60 & - & {\bf 91}* & 4.589 & 60 & - & 60* \\
bn-mouse\_retina\_1 & 1123 & 91K & 77.74 & 80.30 & {\bf 0.0125} & {\em 39} & {\em 0.026} & {\bf 51} & 0.0438 & 35 & 0.184 & {\bf 51} \\
\rowcol polblogs & 1491 & 17K & 95.44 & 93.74 & {\bf 0.0011} & {\em 19} & 0.197 & {\bf 20} & 0.0035 & 16 & 0.005 & {\bf 20} \\
\bottomrule
\end{tabular}
\end{table}
\end{landscape}

\subsection{Experimental Evaluation}
\label{subsec:expt}

\paragraph{Baselines} We benchmark the performance of {\em ALTHEA} against the following existing state-of-the-art approaches: (i) \texttt{igraph}~\cite{igraph} C library's implementation of the exact modified Bron-Kerbosch algorithm~\cite{bron}, (ii) {\em MoMC}~\cite{momc} employing a branch-and-bound pruning strategy, (iii) {\em FMC(E)}~\cite{fmc} using exact hierarchical pruning strategy and (iv) {\em FMC(H)} -- the fast 
heuristic variant of FMC(E), and (v) {\em RMC}~\cite{rmc} -- randomized heuristic based on ``binary search'' with optimum-bounding and is obtained from the authors.
By definition, the exact algorithms give the maximum clique size~$\omega$.

Note that the final pruned subgraph obtained by {\em ALTHEA} is presented to a maximum clique solver. We couple {\em ALTHEA} with either the exact {\em MoMC} solver, or the fast {\em FMC(H)} heuristic (denoted as {\em ALTHEA+MoMC} and 
{\em ALTHEA+FMC(H)}, respectively). The approaches are evaluated on run-time efficiency and accuracy of extracting a maximum clique.
Our implementation of {\em ALTHEA} is in C, and all experiments are run on an Intel Xeon E5-2680 CPU (2.80~GHz) with 8 cores and 32 GB of RAM.

\subsection{Real Datasets}
\label{ssec:real}

We experiment on structured datasets from diverse domains such as biological networks, financial graphs, social interaction and blog conversations.
Again, our instances are obtained from Network Repository~\cite{Rossi2015}.

\paragraph{Easy instances} We selected 17 dense graphs (see Table~\ref{tab:real_small}) with varying sizes of upto 30~K vertices and 1~M edges.
Table~\ref{tab:real_small} reports the vertex and edge pruning achieved by {\em ALTHEA+FMC(H)} in addition to run-time and the maximum clique size  
extracted. 
We see that {\em ALTHEA} is highly accurate 
in identifying regions that contain a maximum clique. In fact, it is successful in extracting an optimal maximum clique in 13 of the instances, while in the remaining 4 instances, 
it extracts larger cliques than standalone {\em FMC(H)}.

We observe that {\em ALTHEA} aggressively prunes the search space (with high accuracy), achieving vertex and edge prunings as high as 99~\% 
--- with more than 80~\% vertex/edge pruning on 11 instances. This enables 
our framework to be very efficient in practice, showcasing consistent speedups of around  $3\times$ compared to the best 
performing heuristic and upto $10\times$ with respect to the exact algorithms. 
On the other hand, {\em RMC} is able to extract the maximum clique size in nearly all the instances, but suffers from large run-time in general (compared to other heuristics), owing to its dependency on  vertex coloring and independent set computation.

\paragraph{Hard instances} We select 8 additional hard instances, on which exact algorithms were unable to run to completion with a timeout of 5~minutes. 
Table~\ref{tab:real_large} tabulates these instances and the performance of the competing approaches. 
Here, we also evaluate the performance of {\em ALTHEA} when coupled with the exact {\em MoMC} solver. 

Similar to our previous observations from Table~\ref{tab:real_small}, 
we find that {\em ALTHEA+FMC(H)} performs better that the standalone {\em FMC(H)} heuristic, and extracts {\bf better solutions}. Further, 
vertex and edge pruning (of around 40~\% on average) gives {\em ALTHEA} faster run-times with upto $5\times$ speedups over {\em FMC(H)}. 
Again, {\em RMC} requires high computation time but extracts larger cliques.

From Table~\ref{tab:real_large}, we see that the pruning strategy of {\em ALTHEA} with {\em MoMC} provides an interesting trade-off between solution quality and run-time. 
This approach is able to identify significantly better solutions compared to others, in all instances. In fact, for the last two instances in Table~\ref{tab:real_large} (also 
in Table~\ref{tab:real_small}), we are now able to extract the optimal solution. Although {\em ALTHEA+MoMC} consumes slightly more run-time 
(than {\em FMC(H)}), it is still faster than {\em RMC}.

To summarize, we see that {\em ALTHEA} provides an efficient and robust pruning strategy for finding an approximate maximum clique with high accuracy in dense real-life graphs from diverse domains.
Further, as is well-known, such dense instances constitute a major challenge for state-of-the-art solvers.

\subsection{Synthetic datasets}
\label{ssec:synt}

We turn to study the robustness of {\em ALTHEA} on Erd\H{o}s-R\'{e}nyi (ER) random graphs, denoted as $G(n,p)$, which is an $n$-vertex 
graph where every edge is present with independent probability~$p$.
We observe the pruning ratio, run-time and accuracy of the approaches, by varying the two parameters $n$ and $p$.
Particularly for $p \geq 0.5$, random graphs present a challenging benchmark for pruning.
Hence, we relax the accuracy measure by considering a heuristic {\em accurate} if the size of the clique returned is at most~1 less than the optimum.

\begin{figure*}
	\centering
	\begin{subfigure}[b]{0.475\textwidth}
		\centering
		\includegraphics[width=\textwidth]{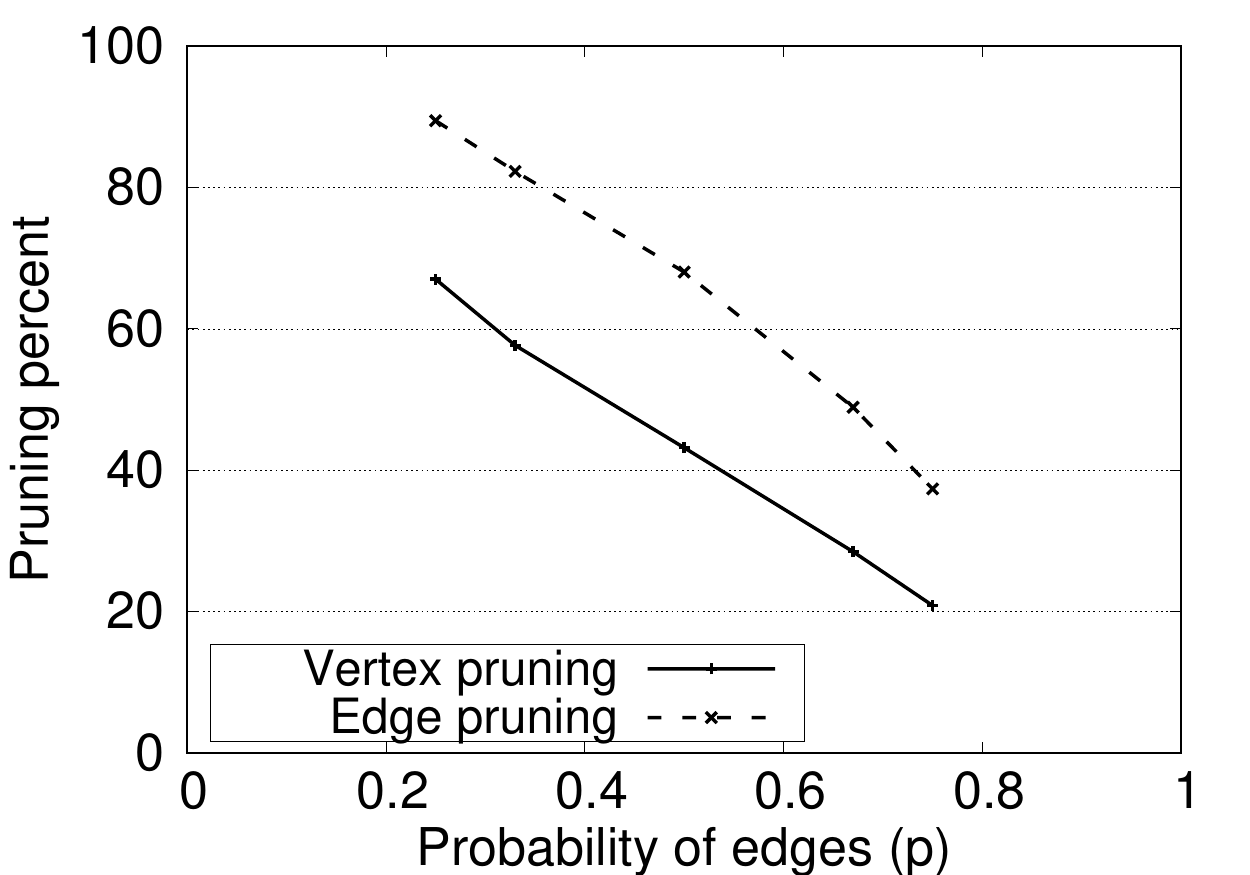}
		\caption[]%
		{}    
	\end{subfigure}
	\hfill
	\begin{subfigure}[b]{0.475\textwidth}  
		\centering 
		\includegraphics[width=\textwidth]{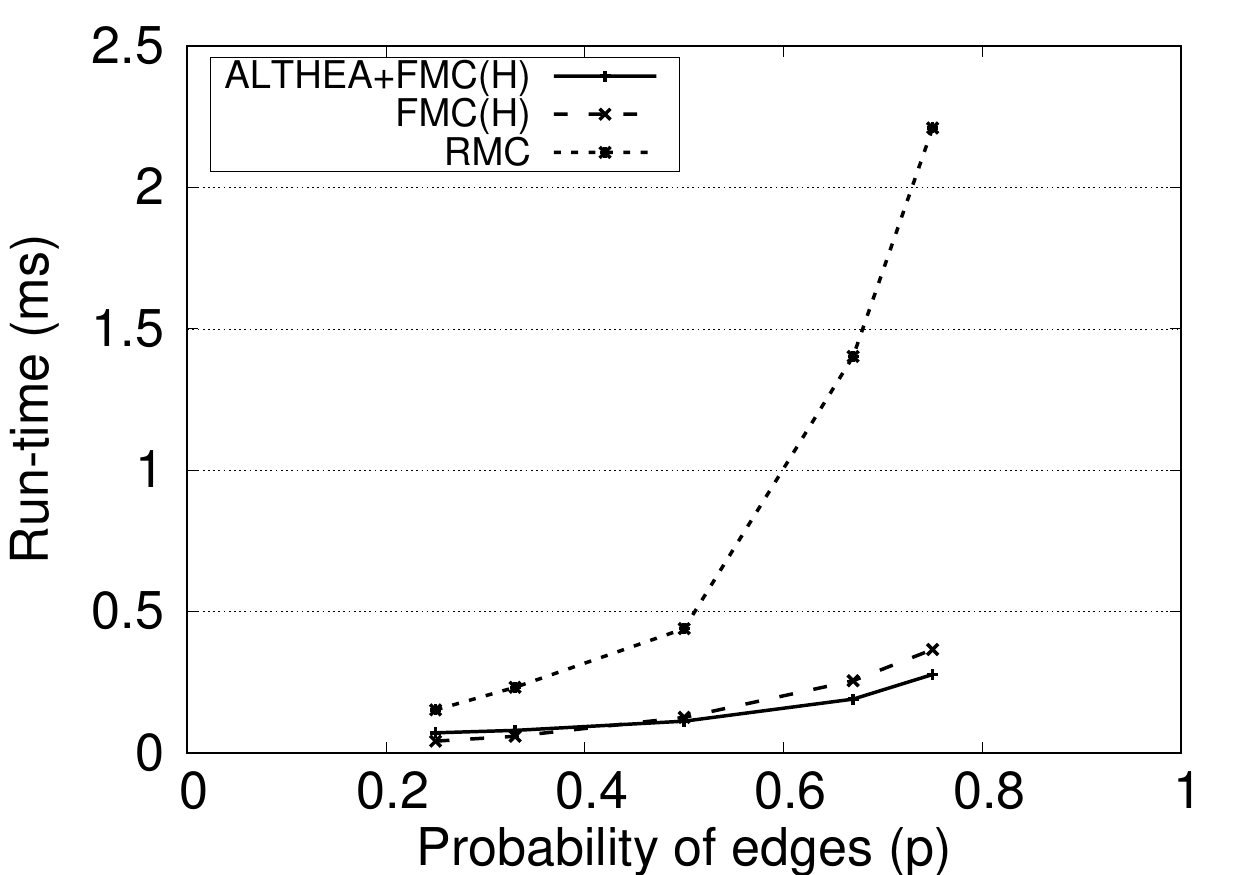}
		\caption[]%
		{}    
	\end{subfigure}
	\vskip\baselineskip
	\begin{subfigure}[b]{0.475\textwidth}   
		\centering 
		\includegraphics[width=\textwidth]{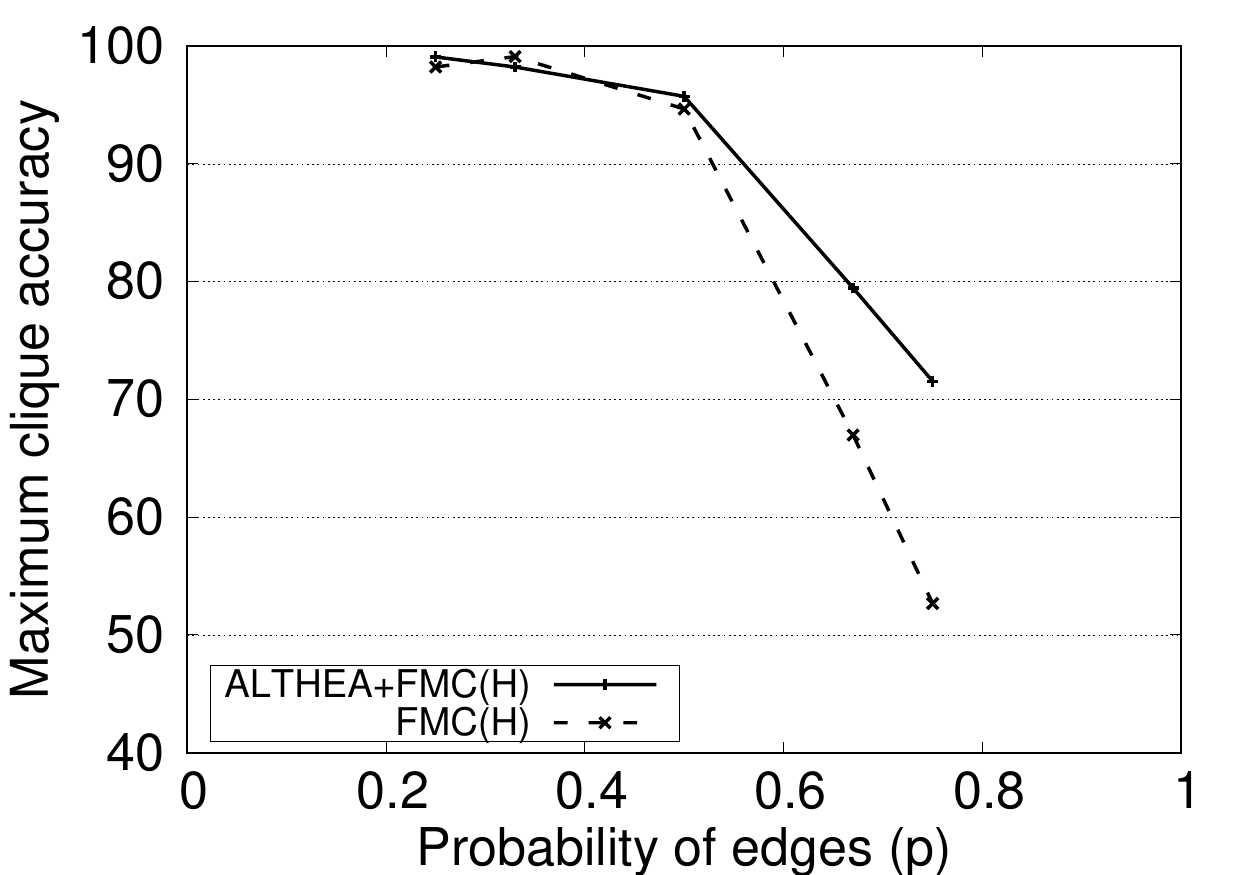}
		\caption[]%
		{}    
	\end{subfigure}
	\quad
	\begin{subfigure}[b]{0.475\textwidth}   
		\centering 
		\includegraphics[width=\textwidth]{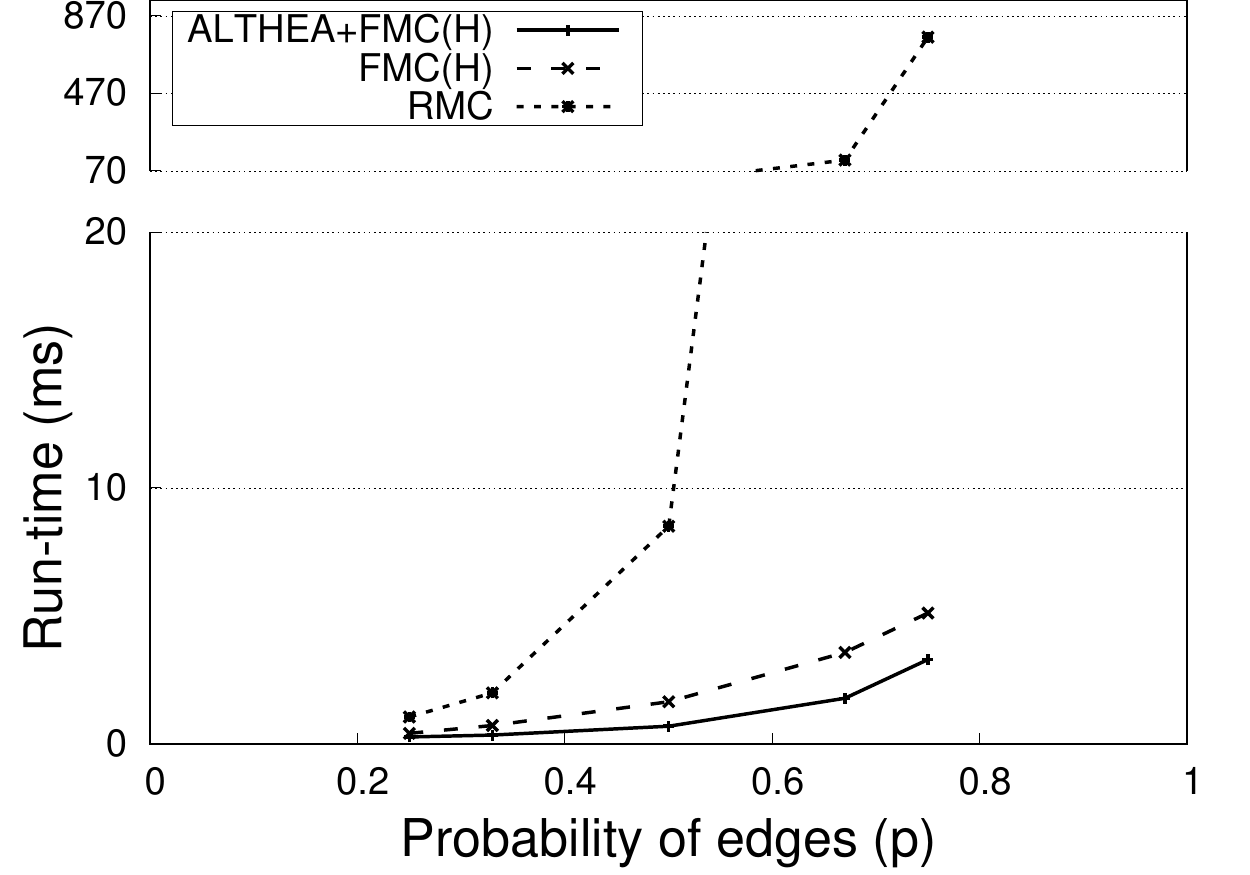}
		\caption[]%
		{}    
	\end{subfigure}
	\vskip\baselineskip
	\begin{subfigure}[b]{0.475\textwidth}   
		\centering 
		\includegraphics[width=\textwidth]{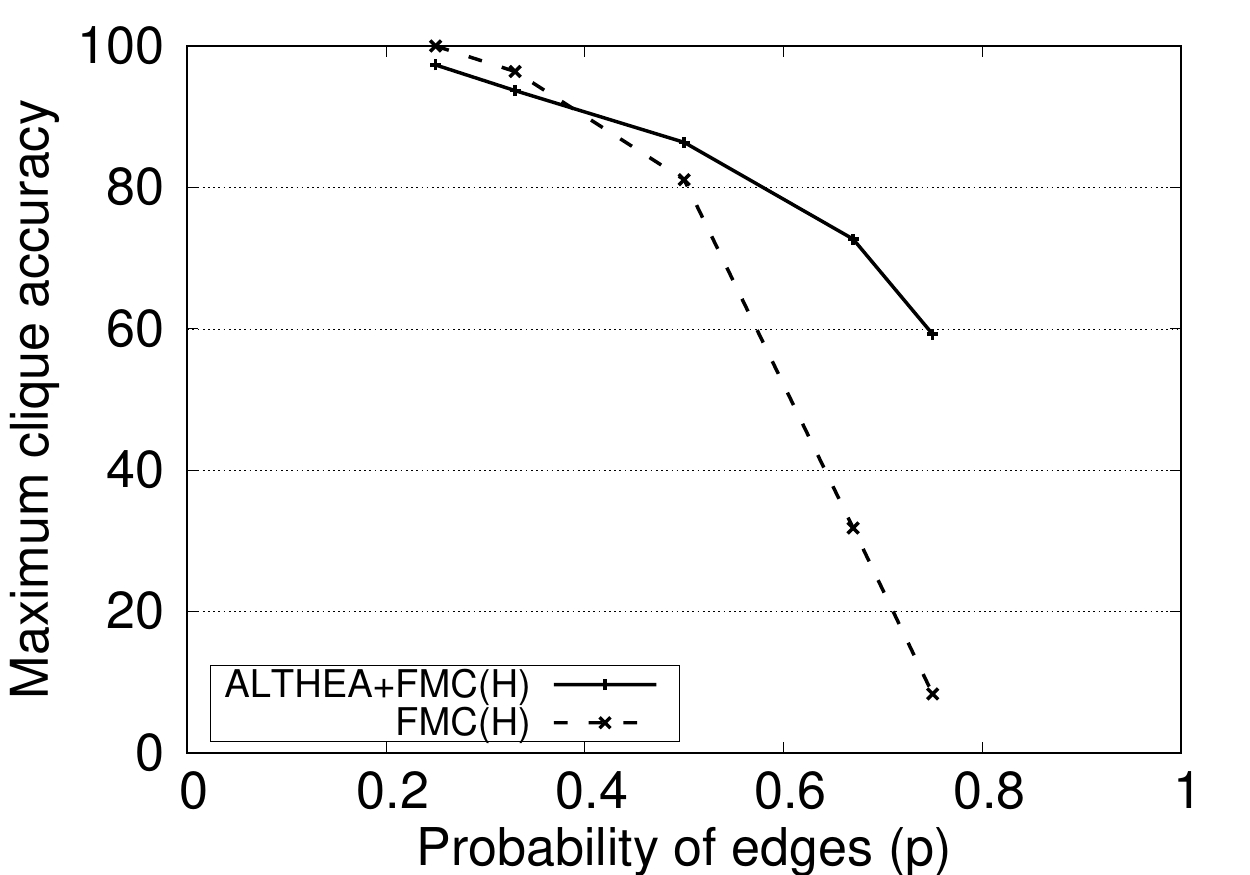}
		\caption[]%
		{}    
	\end{subfigure}
	\caption[ The average and standard deviation of critical parameters ]
	{Performance comparison of competing approaches on ER random graphs $G(64,p)$ with varying density $p$ based on \textbf{(a)} vertex and edge pruning rates, \textbf{(b)} run-time, and 
	\textbf{(c)} maximum clique accuracy; and on $G(256,p)$ with varying density $p$, based on \textbf{(d)} run-time and \textbf{(e)} maximum clique accuracy.} 
	\label{fig:er_256}
\end{figure*}

\paragraph{Graph density} The effect of density on the performance of the approaches is shown in Figures~\ref{fig:er_256}(a)-(c) 
obtained on ER-graphs with~64 vertices with varying density of $p \in \{0.25,0.33,0.5,0.67,0.75\}$. 
In terms of pruning rate, we observe in Figure~\ref{fig:er_256}(a) that {\em ALTHEA} effectively prunes nearly 50~\% of the edges (and vertices) even in dense random graphs (i.e., $p=0.5$). However, the pruning rate decreases linearly with increase in density (to around 20~\% for $p=0.75$). The high pruning rate enables {\em ALTHEA} (coupled with FMC(H) heuristic) to be superior than the other approaches in terms 
of run-time demonstrating upto $1.5\times$ speedups compared to the standalone {\em FMC(H)}. 
Similar to the real datasets, 
{\em RMC} suffers from high run-time (upto $10\times$ slower). Interestingly, we observe that {\em ALTHEA} 
exhibits higher accuracy compared to {\em FMC(H)} (Figure~\ref{fig:er_256}(c)). For $G(64,0.75)$, we report an accuracy
of more than 70~\% compared to around 50~\% for {\em FMC(H)}.
The accuracy of {\em FMC(H)} is seen to degrade significantly as density increases. 
For low density graphs (i.e., $p < 0.5$), both heuristics perform similarly.
{\em RMC} has perfect accuracy, but infeasible running times for larger and denser graphs.

\paragraph{Graph size} We assess the effect of varying $n$ on the performance of {\em ALTHEA}. 
Figures~\ref{fig:er_256}(d)-(e) present the results for $n=256$.

The approaches are seen to exhibit similar behaviour as above, with high pruning rates for {\em ALTHEA}, along with a large speedup in extracting large cliques 
compared to {\em FMC(H)}. 
From Figure~\ref{fig:er_256}(e), we observe that our approach 
depicts significantly superior accuracy (compared to {\em FMC(H)}) -- being nearly $6\times$ more accurate in identifying a maximum clique in dense input graphs.
Finally, we remark that similar results were observed on ER-graphs for other parameter values of $n$ and $p$, but omit further details.

\section{Conclusions}
We have proposed a novel framework for learning to scale-up combinatorial optimization algorithms. In contrast to the existing learning frameworks that use difficult-to-interpret learning models to learn the exact decision boundary, our proposed framework relies on interpretable learning models with local features to prune the elements that are not in any optimal solution(s). The deeper insights learned by our multi-stage pruning framework result in the identification of feature combinations relevant to the optimization problem and the instance class. This can result in better heuristics for the problem, as evidenced by maximum clique enumeration.

Our framework has been designed primarily for combinatorial optimization problems that involve finding an optimal subset of elements. A crucial direction for future research is to explore if this framework can be extended to deal with combinatorial optimizations involving ordering and assignment problems. Other avenues for future research include the design of approaches to improve the accuracy of the pruning by incorporating problem constraints in the learning process. 
\bibliographystyle{abbrv}
\bibliography{bibliography}

\begin{thebibliography}{10}

\bibitem{bio}
F.~N. Abu-Khzam, N.~E. Baldwin, M.~A. Langston, and N.~F. Samatova.
\newblock {On the relative efficiency of maximal clique enumeration algorithms,
  with applications to high-throughput computational biology}.
\newblock In {\em {International Conference on Research Trends in Science and
  Technology}}, 2005.

\bibitem{Adamic2003}
L.~A. Adamic and E.~Adar.
\newblock {Friends and neighbors on the Web}.
\newblock {\em Social Networks}, 25(3):211--230, 2003.

\bibitem{protein}
N.~E. Baldwin, R.~L. Collins, M.~A. Langston, M.~R. Leuze, C.~T. Symons, and
  B.~H. Voy.
\newblock {High performance computational tools for Motif discovery}.
\newblock In {\em {Parallel and Distributed Processing Symposium}}, 2004.

\bibitem{Batson2013}
J.~Batson, D.~A. Spielman, N.~Srivastava, and S.-H. Teng.
\newblock Spectral sparsification of graphs: Theory and algorithms.
\newblock {\em Communications of the ACM}, 56(8):87--94, Aug. 2013.

\bibitem{Bello2016}
I.~Bello, H.~Pham, Q.~V. Le, M.~Norouzi, and S.~Bengio.
\newblock Neural combinatorial optimization with reinforcement learning.
\newblock {\em arXiv preprint arXiv:1611.09940}, 2016.

\bibitem{beha}
H.~R. Bernard, P.~D. Killworth, and L.~Sailer.
\newblock Informant accuracy in social network data iv: a comparison of
  clique-level structure in behavioral and cognitive network data.
\newblock {\em {Social Networks}}, 2(3):191--218, 1979.

\bibitem{finan}
V.~Boginski, S.~Butenko, and P.~M. Pardalos.
\newblock {Statistical analysis of financial networks}.
\newblock {\em {Computational Statistics and Data Analysis}}, 48(2):431--443,
  2005.

\bibitem{Bollobas2013}
B.~Bollob{\'a}s.
\newblock {\em Modern graph theory}, volume 184.
\newblock Springer Science \& Business Media, 2013.

\bibitem{mce}
I.~Bomze, M.~Budinich, P.~Pardalos, and M.~Pelillo.
\newblock {The Maximum Clique Problem}.
\newblock In {\em {Handbook of Combinatorial Optimization}}, volume~4, pages
  1--74. Kluwer Academic Publishers, 1999.

\bibitem{bron}
C.~Bron and J.~Kerbosch.
\newblock Algorithm 457: finding all cliques of an undirected graph.
\newblock {\em Communications of the ACM}, 16(9):575--577, 1973.

\bibitem{BW06}
S.~Butenko and W.~Wilhelm.
\newblock Clique-detection models in computational biochemistry and genomics.
\newblock {\em European Journal of Operational Research}, 173:1–17, 2006.

\bibitem{dC19}
G.~A.~D. Caro.
\newblock A survey of machine learning for combinatorial optimization.
\newblock In {\em 30th European Conference on Operational Research}, 2019.

\bibitem{Chen2006}
J.~Chen, X.~Huang, I.~A. Kanj, and G.~Xia.
\newblock Strong computational lower bounds via parameterized complexity.
\newblock {\em Journal of Computer and System Sciences}, 72(8):1346--1367,
  2006.

\bibitem{igraph}
G.~Csardi and T.~Nepusz.
\newblock The igraph software package for complex network research.
\newblock {\em InterJournal}, Complex Systems:1695, 2006.

\bibitem{fpt-book}
M.~Cygan, F.~V. Fomin, {\L}.~Kowalik, D.~Lokshtanov, D.~Marx, M.~Pilipczuk,
  M.~Pilipczuk, and S.~Saurabh.
\newblock {\em Parameterized Algorithms}.
\newblock Springer, 2015.

\bibitem{DKRJ13}
M.~Depolli, J.~Konc, K.~Rozman, and J.~D.
\newblock Exact parallel maximum clique algorithm for general and protein
  graphs.
\newblock {\em Journal of Chemical Information and Modeling}, 53:2217–2228,
  2013.

\bibitem{Dutta2019}
S.~Dutta and J.~Lauri.
\newblock Finding a maximum clique in dense graphs via $\chi^2$ statistics.
\newblock In {\em Proceedings of the 28th ACM International Conference on
  Information and Knowledge Management}, CIKM '19, pages 2421--2424, 2019.

\bibitem{graph}
S.~Dutta, P.~Nayek, and A.~Bhattacharya.
\newblock {Neighbor-Aware Search for Approximate Labeled Graph Matching using
  the Chi-Square Statistics}.
\newblock In {\em {International Conference on World Wide Web (WWW)}}, pages
  1281--1290, 2017.

\bibitem{Eblen2010}
J.~D. Eblen.
\newblock The maximum clique problem: Algorithms, applications, and
  implementations.
\newblock 2010.

\bibitem{Eblen2012}
J.~D. Eblen, C.~A. Phillips, G.~L. Rogers, and M.~A. Langston.
\newblock The maximum clique enumeration problem: algorithms, applications, and
  implementations.
\newblock In {\em BMC bioinformatics}, volume~13, page~S5. BioMed Central,
  2012.

\bibitem{Erdos1959}
P.~Erd{\"o}s and A.~R{\'e}nyi.
\newblock {On random graphs, I}.
\newblock {\em Publicationes Mathematicae (Debrecen)}, 6:290--297, 1959.

\bibitem{soc}
K.~Faust and S.~Wasserman.
\newblock {\em {Social network analysis: Methods and applications}}.
\newblock {Cambridge University Press}, 1995.

\bibitem{Feldman2017}
V.~Feldman, E.~Grigorescu, L.~Reyzin, S.~S. Vempala, and Y.~Xiao.
\newblock Statistical algorithms and a lower bound for detecting planted
  cliques.
\newblock {\em Journal of the {ACM}}, 64(2):8:1--8:37, 2017.

\bibitem{FLS15}
K.~Ferreira, B.~Lee, and D.~Simchi-Levi.
\newblock Analytics for an online retailer: Demand forecasting and price
  optimization.
\newblock {\em Manufacturing and Service Operations Management}, 18:69 -- 88,
  2015.

\bibitem{autosklearn}
M.~Feurer, A.~Klein, K.~Eggensperger, J.~Springenberg, M.~Blum, and F.~Hutter.
\newblock Efficient and robust automated machine learning.
\newblock In C.~Cortes, N.~D. Lawrence, D.~D. Lee, M.~Sugiyama, and R.~Garnett,
  editors, {\em Advances in Neural Information Processing Systems 28}, pages
  2962--2970. Curran Associates, Inc., 2015.

\bibitem{Fortunato2010}
S.~Fortunato.
\newblock Community detection in graphs.
\newblock {\em Physics reports}, 486(3-5):75--174, 2010.

\bibitem{Grassia2019}
M.~Grassia, J.~Lauri, S.~Dutta, and D.~Ajwani.
\newblock Learning multi-stage sparsification for maximum clique enumeration.
\newblock {\em arXiv preprint arXiv:1910.00517}, 2019.

\bibitem{Jerrum1992}
M.~Jerrum.
\newblock {Large cliques elude the Metropolis process}.
\newblock {\em Random Structures \& Algorithms}, 3(4):347--359, 1992.

\bibitem{Karp1976}
R.~M. Karp.
\newblock The probabilistic analysis of some combinatorial search algorithms.
\newblock {\em Algorithms and complexity: New directions and recent results},
  1:19, 1976.

\bibitem{KM95}
J.~C. Kececioglu and E.~W. Myers.
\newblock Combinatorial algorithms for dna sequence assembly.
\newblock {\em Algorithmica}, 13, 1995.

\bibitem{Khalil2017}
E.~Khalil, H.~Dai, Y.~Zhang, B.~Dilkina, and L.~Song.
\newblock Learning combinatorial optimization algorithms over graphs.
\newblock In {\em Advances in Neural Information Processing Systems}, pages
  6351--6361, 2017.

\bibitem{Korte12}
B.~Korte, J.~Vygen, B.~Korte, and J.~Vygen.
\newblock {\em Combinatorial optimization}, volume~2.
\newblock Springer, 2012.

\bibitem{Kucera1995}
L.~Ku\v{c}era.
\newblock Expected complexity of graph partitioning problems.
\newblock {\em Discrete Applied Mathematics}, 57(2):193--212, 1995.

\bibitem{our-nips}
J.~Lauri and S.~Dutta.
\newblock Fine-grained search space classification for hard enumeration
  variants of subset problems.
\newblock In {\em Thirty-Third {AAAI} Conference on Artificial Intelligence
  ({AAAI})}, pages 2314--2321, 2019.

\bibitem{Lewis2015}
R.~Lewis.
\newblock {\em A guide to graph colouring}, volume~7.
\newblock Springer, 2015.

\bibitem{momc}
C.~Li, H.~Jiang, and F.~Many\`ac.
\newblock On minimization of the number of branches in branch-and-bound
  algorithms for the maximum clique problem.
\newblock {\em Computers and Operations Research}, 84:1--15, 2017.

\bibitem{Li2017}
C.-M. Li, H.~Jiang, and F.~Many\`{a}.
\newblock On minimization of the number of branches in branch-and-bound
  algorithms for the maximum clique problem.
\newblock {\em Computers \& Operations Research}, 84:1--15, 2017.

\bibitem{rmc}
C.~Lu, J.~Yu, H.~Wei, and Y.~Zhang.
\newblock Finding the maximum clique in massive graphs.
\newblock {\em VLDB}, 10(11):1538--1549, 2017.

\bibitem{McKay2014}
B.~D. McKay and A.~Piperno.
\newblock Practical graph isomorphism, {II}.
\newblock {\em Journal of Symbolic Computation}, 60(0):94--112, 2014.

\bibitem{MDMRS19}
A.~Mittal, A.~Dhawan, S.~Medya, S.~Ranu, and A.~K. Singh.
\newblock Learning heuristics over large graphs via deep reinforcement
  learning.
\newblock {\em CoRR}, abs/1903.03332, 2019.

\bibitem{NAL16}
S.~Ngueveu, C.~Artigues, and P.~Lopez.
\newblock Scheduling under a non-reversible energy source: An application of
  piecewise linear bounding of non-linear demand/cost functions.
\newblock {\em Discrete Applied Mathematics}, 208:98 -- 113, 2016.

\bibitem{Ostergard2002}
P.~R. {\"O}sterg{\aa}rd.
\newblock A fast algorithm for the maximum clique problem.
\newblock {\em Discrete Applied Mathematics}, 120(1):197--207, 2002.
\newblock Special Issue devoted to the 6th Twente Workshop on Graphs and
  Combinatorial Optimization.

\bibitem{Palla2005}
G.~Palla, I.~Der{\'e}nyi, I.~Farkas, and T.~Vicsek.
\newblock Uncovering the overlapping community structure of complex networks in
  nature and society.
\newblock {\em Nature}, 435(7043):814, 2005.

\bibitem{Papadopoulos2012}
S.~Papadopoulos, Y.~Kompatsiaris, A.~Vakali, and P.~Spyridonos.
\newblock Community detection in social media.
\newblock {\em Data Mining and Knowledge Discovery}, 24(3):515--554, 2012.

\bibitem{fmc}
B.~Pattabiraman, M.~Patwary, A.~Gebremedhin, W.~Liao, and A.~Choudhary.
\newblock Fast algorithms for the maximum clique problem on massive graphs with
  applications to overlapping community detection.
\newblock {\em Internet Mathematics}, 11(4-5):421--448, 2015.

\bibitem{PRG17}
M.~Probst, F.~Rothlauf, and J.~Grahl.
\newblock Scalability of using restricted boltzmann machines for combinatorial
  optimization.
\newblock {\em European Journal of Operational Research}, 256(2):368–83,
  2017.

\bibitem{pear}
T.~Read and N.~Cressie.
\newblock {Pearson's {$\chi^2$} and the likelihood ratio statistic {$G^2$}: a
  comparative review}.
\newblock {\em {International Statistical Review}}, 57(1):19--43, 1989.

\bibitem{fitStatistics}
T.~R.~C. Read and N.~A.~C. Cressie.
\newblock {\em {Goodness-of-fit statistics for discrete multivariate data}}.
\newblock {Springer Series in Statistics}, 1988.

\bibitem{R12}
J.~Rintanen.
\newblock Planning as satisfiability: Heuristics.
\newblock {\em Artificial Intelligence}, 193:45–86, 2012.

\bibitem{R14}
J.~Rintanen.
\newblock Madagascar: Scalable planning with {SAT}.
\newblock In {\em 8th International Planning Competition ({IPC})}, page
  66–70, 2014.

\bibitem{Rossi2015}
R.~A. Rossi and N.~K. Ahmed.
\newblock The network data repository with interactive graph analytics and
  visualization.
\newblock In {\em Proceedings of the Twenty-Ninth AAAI Conference on Artificial
  Intelligence}, 2015.

\bibitem{Rossi2015b}
R.~A. Rossi, D.~F. Gleich, and A.~H. Gebremedhin.
\newblock Parallel maximum clique algorithms with applications to network
  analysis.
\newblock {\em SIAM Journal on Scientific Computing}, 37(5):C589--C616, 2015.

\bibitem{San2016}
P.~San~Segundo, A.~Lopez, and P.~M. Pardalos.
\newblock A new exact maximum clique algorithm for large and massive sparse
  graphs.
\newblock {\em Computers \& Operations Research}, 66:81--94, 2016.

\bibitem{SR13}
P.~San~Segundo and D.~Rodriguez-Losada.
\newblock Robust global feature based data association with a sparse bit
  optimized maximum clique algorithm.
\newblock {\em IEEE Transactions on Robotics}, 29:1332--1339, 2013.

\bibitem{dynamic}
V.~Stix.
\newblock {Finding all maximal cliques in dynamic graphs}.
\newblock {\em {Computational Optimization and applications}}, 27:173--186,
  2004.

\bibitem{cheby}
P.~Tchebichef.
\newblock Des valeurs moyennes.
\newblock {\em Journal de Math\'{e}matiques Pures et Appliquees}, 12:177--184,
  1867.

\bibitem{Trevisan11}
L.~Trevisan.
\newblock {\em Combinatorial optimization: exact and approximate algorithms}.
\newblock Standford University, 2011.

\bibitem{Vinyals2015}
O.~Vinyals, M.~Fortunato, and N.~Jaitly.
\newblock Pointer networks.
\newblock In {\em Advances in Neural Information Processing Systems}, pages
  2692--2700, 2015.

\bibitem{Wang2014}
C.~Wang, K.~Schwan, B.~Laub, M.~Kesavan, and A.~Gavrilovska.
\newblock Exploring graph analytics for cloud troubleshooting.
\newblock In {\em ICAC}, pages 65--71, 2014.

\bibitem{Welsh1967}
D.~J. Welsh and M.~B. Powell.
\newblock An upper bound for the chromatic number of a graph and its
  application to timetabling problems.
\newblock {\em The Computer Journal}, 10(1):85--86, 1967.

\bibitem{Yang2016}
L.~Yang, J.~Cao, S.~Tang, D.~Han, and N.~Suri.
\newblock Run time application repartitioning in dynamic mobile cloud
  environments.
\newblock {\em IEEE Transactions on Cloud Computing}, 4(3):336--348, 2016.

\bibitem{Yao2013}
Y.~Yao, J.~Cao, and M.~Li.
\newblock A network-aware virtual machine allocation in cloud datacenter.
\newblock In {\em IFIP International Conference on Network and Parallel
  Computing}, pages 71--82. Springer, 2013.

\bibitem{Yeger2004}
E.~Yeger-Lotem, S.~Sattath, N.~Kashtan, S.~Itzkovitz, R.~Milo, R.~Y. Pinter,
  U.~Alon, and H.~Margalit.
\newblock Network motifs in integrated cellular networks of
  transcription--regulation and protein--protein interaction.
\newblock {\em Proceedings of the National Academy of Sciences of the United
  States of America}, 101(16):5934--5939, 2004.

\bibitem{Zuckerman2006}
D.~Zuckerman.
\newblock Linear degree extractors and the inapproximability of max clique and
  chromatic number.
\newblock In {\em Proceedings of the 38th Annual ACM Symposium on Theory of
  Computing ({STOC})}, pages 681--690. ACM, 2006.

\end{thebibliography}
\end{document}